\newif\ifshort
\date{}
\renewcommand\tableofcontents{\listoftoc*{toc}} 
\author[1]{Piotr Faliszewski}
\author[1]{Łukasz Janeczko}
\author[2]{Grzegorz Lisowski}
\author[1]{Kristýna Pekárková}
\author[3,4]{Ildikó~Schlotter}
\affil[1]{AGH University, Poland} 
\affil[2]{University of Groningen, The Netherlands} 
\affil[3]{ELTE Centre for Economic and Regional Studies, Hungary} 
\affil[4]{Budapest University of Technology and Economics, Hungary}
\title{Identifying Imperfect Clones in Elections}
\newcommand{\approxclones}{\textsc{Approximate-Clones Partition}}
\newcommand{\findqclone}{\textsc{Find-$q$-Approximate-Clone}}
\newcommand{\manyperfectclones}{\textsc{Independent-Clones Partition}}
\newcommand{\GlobalSubelectionPart}{\textsc{Subelection-Clones Partition}}
\newcommand{\coverproblem}{\textsc{Exact-3-Cover}}
\newcommand{\triangulation}{\textsc{Partition into Triangles}}
\def\np{\mathsf{NP}}
\def\paranp{\mathsf{para}$-$\mathsf{NP}}
\def\p{\mathsf{P}}
\def\procClean{\mathsf{BreakSegment}}
\def\qbef{\hookleftarrow}
\def\qaft{\hookrightarrow}
\def\qnorm{\circ}
\def\T{\mathcal{T}}
\def\Tcomp{\mathcal{T}^2_{\mathsf{comp}}}
\def\dummyseries{[\circ \circ \circ]}
\newcommand{\ora}[1]{\overrightarrow{#1}}
\newcommand{\pref}{\succ}
\newcommand{\size}{\mathrm{size}}
\newcommand{\calC}{{\mathcal{C}}}
\newcommand{\calR}{{\mathcal{R}}}
\newcommand{\calS}{{\mathcal{S}}}
\newcommand{\shortcitethm}[1]{{\small (T\ref{#1})}}
\newcommand{\shortciteprop}[1]{{\small (P\ref{#1})}}
\newcommand{\pos}{{\mathrm{pos}}}
\newcommand{\rank}{{\mathrm{pos}}}
\newcommand{\shift}{{\mathrm{sh}}}
\newcommand{\maybe}[1]{{\underline{#1}}}
\newtheorem{theorem}{Theorem}[section]
\newtheorem{corollary}[theorem]{Corollary}
\newtheorem{proposition}[theorem]{Proposition}
\newtheorem{definition}[theorem]{Definition}
\newtheorem{remark}[theorem]{Remark}
\newcommand{\hy}{\hbox{-}\nobreak\hskip0pt}
\newcommand{\FPT}{\ensuremath{\mathsf{FPT}}\xspace}
\newcommand{\fpt}{\FPT}
\newcommand{\XP}{\ensuremath{\mathsf{XP}}\xspace}
\newcommand{\xp}{\XP}
\newcommand{\Wh}[1]{$\mathsf{W[#1]}$\hy{}hard\xspace}
\begin{document}

\maketitle

\begin{abstract}
  A perfect clone in an ordinal election (i.e., an election where the
  voters rank the candidates in a strict linear order) is a set of candidates that each voter
  ranks consecutively. We consider different relaxations of this notion: 
  \emph{independent} or \emph{subelection clones} are sets of candidates that only some of the
  voters recognize as a perfect clone, whereas \emph{approximate clones} are
  sets of candidates such that every voter ranks their members close to
  each other, but not necessarily consecutively. We establish the complexity of identifying such imperfect clones, and
  of partitioning the candidates into families of imperfect clones. 
  We also study the parameterized complexity of these problems with respect to a set of natural parameters such as the number of voters, the size or the number of imperfect clones we are searching for, or their level of  imperfection.
\end{abstract}

\section{Introduction}

Let us consider an ordinal election, with a given set of candidates
and a collection of voters that rank these candidates. Such elections
may appear in 
the classic political settings---e.g., when the goal is to elect a
president or some other leader---but also in various other contexts,
such as sport events---e.g., where the candidates are the athletes and
the ``voters'' are particular competitions, ranking them according to
their performance---or multicriteria evaluations---e.g., when listing
the best universities or the most livable cities, where the ``voters''
are the applied quality measures; for examples of such election data,
see the work of \citet{boe-sch:c:election-data}.  In each of these
settings, we expect some groups of candidates to be similar to each
other. For example, left- or right-wing politicians are likely to have
similar platforms, athletes with similar abilities---such as, e.g.,
the top sprinters in Tour de France---typically perform similarly, and
universities of a particular type---such as small liberal arts
colleges---have similar features. Consequently, we expect voters to
rank such candidates close to each other.  Formally, sets of
candidates that all voters rank on consecutive positions are called
\emph{clones} and were already studied in some detail both in AI and
(computational) social choice; as examples, we point to the works of
\citet{tid:j:clones}, \citet{las:j:rank-based,las:j:aggregation} and
\citet{elk-fal-sli:j:cloning,elk-fal-sli:c:decloning}; see also the
notion of \emph{composition
  consistency}~\citep{laf-lai-las:j:composition,las:b:tournament-solutions}
and its connection to
clones~\citep{ber-cas-rob-ong-con-elk:c:clones}. Yet, \emph{similar}
does not mean \emph{identical} and, so, expecting all the voters to
perfectly recognize all the clones---i.e., rank them
consecutively---seems overly demanding.  Hence, we focus on the
problem of identifying those candidates that form \emph{imperfect
  clones}, and on the problem of partitioning the set of candidates
into such clones. We believe that the ability to solve these problems
allows one to better understand the structure and nature of the
candidates in a given election; in this sense, we follow up on the
work of \citet{jan-lan-lis-szu:c:consistent-subelections} on
identifying hidden structures in voters' preferences.
Our work is also related to that of
\citet{pro-sch-zha:c:approximate-clones}, who consider approximate
clones in the context of training LLMs.
However, in contrast to their work, our focus is on establishing the
(parameterized) complexity of our problems.
%

There are two main ways in which clones can be imperfect. First, we
may require all the voters to recognize all the clones, but cut them
some slack on ranking their members consecutively: Voters simply need
to rank clone members on close enough positions, but they are allowed
to put some candidates that are not members of the given clone in
between. Second, we may require the voters that recognize the clones
to indeed rank their members consecutively, but allow each clone to be
recognized by a possibly different subset of the voters. We refer to
the former type of clones as \emph{approximate} and to the latter one
as \emph{independent}. If we insist that all the independent clones
are recognized by the same voters, then we refer to them as
\emph{subelection} clones. We find that the complexity of our clone
identification/clone partition problems very strongly depends on the
type of imperfection that we consider. For example, we obtain the
following results:
\begin{enumerate}
\item There is a simple, polynomial-time algorithm for identifying
  independent clones, 
  but recognizing approximate clones is $\np$-hard. Yet, if members of
  approximate clones are not ranked too far apart, then we can
  identify them using an $\fpt$ algorithm (see
  \Cref{sec:clone-structures}).

\item Partitioning the set of candidates into a given number of
  imperfect clones, each of at most a given size, is $\np$-hard, but
  there are algorithms for some special cases. For example, there is
  an $\fpt$ algorithm for approximate clones for parameterization by
  their imperfection level, an $\xp$ algorithm for
  independent/subelection clones for parameterization by the number of
  clones, and an $\fpt$ algorithm for subelection clones for
  parameterization by the number of voters (see \Cref{sec:partitions}
  and Table~\ref{tab:results}).
\end{enumerate}
Indeed, we classify the parameterized complexity of the partitioning
problem for all our basic parameters that clone partitions may have.
We defer some of our proofs to
\ifshort the 
supplementary material
\else
the appendix 
\fi
and mark such  results with the~$\star$ symbol.

\section{Preliminaries}

For an integer $k$, we write $[k]$ for the set $\{1, \ldots, k\}$.

\paragraph{Elections.}
An \emph{(ordinal) election} is a pair $E = (C,V)$, where
$C = \{c_1, \ldots, c_m\}$ is a set of \emph{candidates} and
$V = \{v_1, \ldots, v_n\}$ is a set
of \emph{voters} (sometimes
also referred to as \emph{votes}). 
Each voter $v$ has a \emph{preference
  order} $\pref_{v}$ that strictly ranks all candidates in~$C$.
  We write $v \colon a \pref b$ to indicate that $v$ ranks candidate $a$
above candidate~$b$, and we use analogous notation for larger
candidate sets. A set of candidates in a preference
order should be interpreted as ranking its members in some arbitrary but fixed
order.
For a voter $v$ and candidate $c$, we write
$\pos_{v}(c)$ to denote the position of~$c$ in $v$'s preference
order; the top-ranked candidate has position~$1$.

\paragraph{Clone Structures and Clone Partitions.}Let $E = (C,V)$ be
an election and let $b \geq 1$ be an integer. We say that a nonempty
subset $X$ of candidates is a $b$-\emph{perfect clone} in $E$ if
$|X| \leq b$ and each voter ranks members of $X$ consecutively (but,
possibly, in a different order).
We are also interested in clones that are in some way imperfect, i.e.,
either not all voters recognize the given candidates as forming a clone
(so \emph{independent} groups of voters may recognize different
clones), or the voters do not agree on the exact compositions of the
clones (so the clones are \emph{approximate}). We formalize these
ideas below:

\begin{description}
\item[Approximate Clones.] Let $b \geq 1$ and $q \geq 0$ be
  integers. A subset $X$ of candidates is a
  \emph{$(q,b)$-approximate clone} if $|X| \leq b$ and for each voter
  $v \in V$ we have
  $\max_{x \in X} \pos_v(x) - \min_{x \in X} \pos_v(x) \leq q+b-1$.
  In other words, each voter ranks the candidates from $X$
  \emph{almost} consecutively, ranking at most $q$ 
  candidates outside~$X$ in between any two candidates from~$X$. 
\item[Independent Clones.] Let $b$ and~$w$ be positive integers. A subset $X$ of candidates is a $(w,b)$-\emph{independent
    clone} if $|X| \leq b$ and there are at least $w$ voters who \emph{recognize} $X$, i.e., rank all
  members of~$X$ consecutively.
\end{description}

A $q$-approximate clone or a $w$-independent clone is a clone that is
$(q,b)$-approximate or $(w,b)$-independent for some~$b$, respectively.  A
\emph{\{perfect, $q$-approximate, $w$-inde\-pen\-dent\}-clone structure} for
election $E$ is the set of all \{perfect, $q$-approximate,
$w$-inde\-pendent\}-clones that appear in~$E$.  A \emph{\{perfect, $q$-approximate,
  $w$-independent\}-clone partition} for $E$ is a family
$\calC = \{C_1, \ldots, C_t\}$ of subsets of~$C$ whose
members are \{perfect, $q$-approximate, $w$-independent\}-clones and each
candidate belongs to exactly one~$C_i$. We also consider
subelection-clone partitions: 
\begin{description}  
\item[Subelection Clones.] Let $b$ and~$w$ be positive integers. Then
  $(C_1, \ldots, C_t)$ is a \emph{subelection-clone partition} if
  there is a subset $W \subseteq V$ of at least $w$ voters such that
  $(C_1, \ldots, C_t)$ is a perfect-clone partition for election~$(C,W)$ with each clone~$C_i$ having size at most~$b$. We refer to
  the members of such a partition as \emph{$(w,b)$-subelection clones}.
\end{description}

We often speak of clones, clone structures, and clone partitions
without specifying the exact type of clones involved. In such cases,
the relevant type is clear from the context. Similarly, we sometimes
speak of, e.g., $(q,b)$-approximate-clone partition to mean an
approximate-clone partition whose each member is a $(q,b)$-approximate
clone.

\begin{remark}
  The difference between independent clones and
  subelection clones is that 
  two sets are
  $(w,b)$-independent clones if for each of them there is a group of
  $w$ voters who ranks their members consecutively, but these two
  groups do not need to coincide and for small enough $w$ may even be
  disjoint.
  On the other hand, subelection clones have to be ranked
  consecutively by the same $w$ voters.
\end{remark}

Our theoretical results mostly regard the computational complexity of
the following family of problems.

\medskip
\noindent
\begin{minipage}{0.88\columnwidth}
\fbox{
    \begin{tabular}{@{\hspace{2pt}}l@{\hspace{4pt}}p{0.84\columnwidth}@{\hspace{2pt}}}%
    \multicolumn{2}{@{\hspace{2pt}}p{0.87\columnwidth}}{\textsc{\{Perfect, Approximate, Independent, 
    Subelection\}-Clone Partition}} 
    \\[3pt] 
        {\bf Input:}  
         & An election
  $E = (C,V)$, positive integers $b$  and~$t$, as well as integer
  $q \geq 0$ for the \textsc{Approximate} variant, and integer
  $w \geq 1$ for \textsc{Subelection} and \textsc{Independent}
  variants. \\
         {\bf Question:} & Is there a  clone partition of size at most~$t$ for election~$E$ 
  that consists of $b$-perfect, $(q,b)$-approximate,
  $(w,b)$-independent, or $(w,b)$-subelection clones, respectively?
    \end{tabular}
    }
\end{minipage}
\medskip

\paragraph{Computational Complexity.}
We assume familiarity with classic and parameterized computational
complexity theory, as presented in the textbooks of
\citet{pap:b:complexity}, \citet{nie:b:invitation-fpt} and
\citet{cyg-fom-kow-lok-mar-pil-pil-sau:b:fpt}. 
The following problem plays an important role in our hardness proofs.

\begin{definition}
  The input of the \textsc{Restricted Exact Cover by 3-Sets (RX3C)} problem, consists of a ground set~$X$ and a family $\calS$
  of size-$3$ subsets of~$X$ with each $x \in X$ occurring in exactly three sets of~$\calS$. We ask if there is a collection of $k=\nicefrac{|X|}{3}$ sets
  from~$\calS$ whose union is $X$.
\end{definition}
\noindent
Naturally, the $k$ sets about whose existence we ask must be disjoint.
This variant of the problem was shown to be $\np$-complete by \citet{gon:j:x3c}. Some of our proofs will rely on the 7-colorability of a certain graph underlying an instance of RXC3, as stated in the following observation.

\begin{proposition}
    \label{prop:RX3C-7colorable}
    Given an instance $(X,\calS)$ of RX3C, we can partition~$\calS$ in polynomial time into $\calS_1,\dots,\calS_7$ so that for each $i \in [7]$ the sets in $\calS_i$ are pairwise disjoint.
\end{proposition}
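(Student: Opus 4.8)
The plan is to reduce the statement to a graph-coloring argument. First I would introduce the \emph{conflict graph} $G = (\calS, E)$ whose vertex set is $\calS$ and in which two sets $S, S' \in \calS$ are joined by an edge precisely when $S \cap S' \neq \emptyset$. A partition of~$\calS$ into $\calS_1, \dots, \calS_7$ with each $\calS_i$ consisting of pairwise disjoint sets is exactly a proper $7$-coloring of~$G$ (with color classes $\calS_1, \dots, \calS_7$, some of which may be empty). Hence it suffices to exhibit a proper $7$-coloring of~$G$ computable in polynomial time.

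Next I would bound the maximum degree of~$G$. Consider an arbitrary $S = \{x,y,z\} \in \calS$. Every neighbor of~$S$ in~$G$ is a set $S' \neq S$ sharing at least one element with~$S$, i.e., containing $x$, $y$, or~$z$. Since each element of~$X$ occurs in exactly three members of~$\calS$, the element~$x$ lies in exactly two sets other than~$S$, and likewise for~$y$ and~$z$; thus~$S$ has at most $3 \cdot 2 = 6$ neighbors in~$G$, so $\Delta(G) \leq 6$.

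Finally I would invoke the standard greedy bound: any graph of maximum degree~$\Delta$ can be properly colored with $\Delta+1$ colors in polynomial time, by processing the vertices in an arbitrary order and assigning to each the smallest color avoided by its already-colored neighbors (one of the first $\Delta+1$ colors is always free, as each vertex has at most~$\Delta$ neighbors). Applying this to~$G$ yields a proper $7$-coloring in polynomial time, and the color classes form the required partition $\calS_1,\dots,\calS_7$. There is essentially no obstacle here; the only point worth noting is that we only require \emph{at most} seven nonempty parts, so permitting some~$\calS_i$ to be empty is harmless, and the degree bound $\Delta(G) \leq 6$ is exactly what makes seven colors suffice.
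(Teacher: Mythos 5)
Your proposal is correct and matches the paper's argument essentially verbatim: both construct the conflict graph on~$\calS$ with edges between intersecting triples, bound its maximum degree by~$6$ using the fact that each element occurs in exactly three sets, and apply greedy $(\Delta+1)$-coloring to obtain the seven classes in polynomial time. Nothing further is needed.
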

\begin{proof}
    Consider the graph~$H$ whose vertex set is~$\calS$ and where two triples from~$\calS$ are connected by an edge if and only if they share a common element of~$X$. The maximum degree in~$H$ is at most~$6$, because for any $S \in \calS$, there are at most two triples in~$\calS \setminus \{S\}$ containing a fixed element~$x \in S$.
    Thus, we can find a proper coloring of~$H$ with~$7$ colors in polynomial time (using a greedy coloring). The 7 color classes obtained give the desired partitioning of~$\calS$.
\end{proof}

We will also use the following special variant of the \textsc{Exact Cover}
problem in some of our algorithms.

\begin{definition}
  In the \textsc{Weighted Exact Cover (WEC)} problem we are given a set $X = [m]$ of
  elements, a family $\calS = \{S_1, \ldots, S_n\}$ of subsets of $X$
  where each set $S_i$ is associated with weight $w_i$, and a
  nonnegative integer $t$. We ask if there is a set $I \subseteq [n]$
  such that 
  \begin{enumerate}
    \item for each $a, b \in I$, $a \neq b$, $S_a \cap S_b = \emptyset$, 
    \item $\bigcup_{\ell \in I}S_\ell = X$, and 
    \item $\sum_{\ell \in I} w_\ell \leq t$.
  \end{enumerate}
\end{definition}

Consider some positive integer $m$. 
A set $A \subseteq [m]$ is an \emph{interval} if either $A=\emptyset$ or
$A = \{i, i+1, i+2, \ldots, j\}$ for some  integers $i$ and $j$, $i \leq j$. Further, given a nonnegative integer~$q$, we say that $A$ is a \emph{$q$-approximate interval} if the set
\begin{align*}
  \{  \min&(A)+1, \min(A)+2, \ldots, \min(A)+(q-1)\}
  \cup A \, \cup \\
  &\{ \max(A)-1, \max(A)-2, \ldots, \max(A)-(q-1)\}
\end{align*}
is an interval. Roughly speaking, a $q$-approximate interval is an interval that,
possibly, is missing some of the first and last~$q$ values. The next result 
follows due to a simple dynamic programming, provided for the sake of completeness.

\begin{proposition}\label{pro:wec-q-approx}
  There is an algorithm that given an instance~$I$ of \textsc{WEC} where
  each set is a $q$-approximate interval decides in time  $\mathcal{O}^*(2^{4q})$ if
  $I$ is a yes-instance.\footnote{The notation $\mathcal{O}^*$ hides polynomial factors.}
\end{proposition}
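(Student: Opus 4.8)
The plan is to design a left-to-right dynamic program that scans the positions $1,2,\dots,m$ while maintaining only a window of width $\mathcal{O}(q)$ of local information around the scanning frontier. The starting point is a structural observation about $q$-approximate intervals: if $S$ has $\ell=\min(S)$ and $r=\max(S)$, then $S$ contains the whole interval $[\ell+q,r-q]$ (its \emph{core}, whenever $\ell+q\le r-q$), and it differs from the plain interval $[\ell,r]$ only within the two \emph{fringes} $[\ell,\ell+q-1]$ and $[r-q+1,r]$, each of width~$q$. Hence a chosen set can behave irregularly only within distance~$q$ of one of its endpoints; far from the endpoints it is simply a solid block.

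For the state, I use entries $T[p,M,\pi]$, where $p\in\{1,\dots,m+1\}$ is the frontier, $M\subseteq\{p,p+1,\dots,p+2q-1\}$ is a width-$2q$ bitmask, and $\pi\in\{\text{none}\}\cup\{1,\dots,n\}$. The value $T[p,M,\pi]$ is the minimum total weight of a family of pairwise disjoint input sets such that every position of $[1,p-1]$ is covered by the family; the positions of the window $\{p,\dots,p+2q-1\}$ covered by the family are exactly those in~$M$; and no chosen set covers a position $\ge p+2q$, with the sole exception of the set $S_\pi$ when $\pi\neq\text{none}$. The instance is a yes-instance if and only if $T[m+1,\emptyset,\text{none}]\le t$.

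Each transition performs one step of the scan, moving from frontier~$p$ to $p+1$. If $p\in M$, position~$p$ is already covered: we slide the window, declaring the newly revealed position $p+2q$ covered exactly when $\pi\neq\text{none}$ and $p+2q\in S_\pi$, and we reset $\pi$ to $\text{none}$ once $\max(S_\pi)$ has re-entered the window. If $p\notin M$, then---since every input set with minimum below~$p$ has already been decided and no set with minimum above~$p$ contains~$p$---we must now choose exactly one set~$S$ with $\min(S)=p$; we add $w_S$ to the weight, mark the elements of~$S$ that fall in the window (and reject the transition if any such element already lies in~$M$, which would violate disjointness or exact coverage), and set $\pi$ to the index of~$S$ whenever $\max(S)\ge p+2q$. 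Each transition is computed in polynomial time, and each state has $\mathcal{O}(n)$ successors.

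The step I expect to be the main obstacle is justifying that a single slot~$\pi$ suffices, i.e.\ that at any frontier at most one chosen set can reach beyond the current window. Here the core observation is essential: if two chosen sets $S,S'$ both had minimum at most $p-1$ and maximum at least $p+2q$, then $\min(\cdot)+q\le p+q\le\max(\cdot)-q$ for both, so the core of each would contain the position $p+q$, contradicting that the chosen sets are pairwise disjoint. Together with the straightforward invariant that at frontier~$p$ every chosen set has minimum at most $p-1$, this shows $T[p,M,\pi]$ carries all information needed to continue, so the recurrence is correct. As there are $\mathcal{O}(m)\cdot 2^{2q}\cdot(n+1)=\mathcal{O}^*(2^{2q})$ states, each processed in polynomial time, the algorithm runs in time $\mathcal{O}^*(2^{2q})\subseteq\mathcal{O}^*(2^{4q})$; the remaining details---the base case requiring position~$1$ to be covered by a set with minimum~$1$, and the verification that a window of width $2q$ contains every fringe of every relevant set---are routine.
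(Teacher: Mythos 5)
Your proof is correct, and while it lives in the same general framework as the paper's---a left-to-right dynamic program over a frontier position together with a $2q$-wide bitmask of ``pre-covered'' positions---the transition structure is genuinely different and rests on a structural lemma the paper does not need. The paper's DP (states $f(i,B)$ with $B\subseteq\{i+1,\dots,i+2q\}$ recording exactly which positions past the frontier are already covered) adds one whole set per transition, letting the frontier jump from $i$ to $j$; since a transition must match a pair of bitmasks $(B',B'')$, evaluating the recurrence costs $\mathcal{O}^*(2^{2q})$ per state, giving $\mathcal{O}^*(2^{4q})$ overall. You instead advance the frontier one position at a time and augment the state with the index $\pi$ of the unique chosen set that may extend beyond the window. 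The justification that one slot suffices---two sets with minimum at most $p-1$ and maximum at least $p+2q$ would both contain the core position $p+q$, violating disjointness---is exactly the extra insight required, and your disjointness/coverage checks against $M$ and $\pi$ are sound. What this buys is a quadratically smaller exponential factor, $\mathcal{O}^*(2^{2q}\cdot n)=\mathcal{O}^*(2^{2q})$, which of course implies the stated $\mathcal{O}^*(2^{4q})$ bound; the paper's version is slightly simpler to state (no set-index in the state, no single-overhang lemma) at the cost of the worse exponent. Both establish the proposition.
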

\begin{proof}
  Consider an instance of \textsc{WEC} with ground set $[m]$, a family
  $\calS = \{S_1, \ldots, S_n\}$ of sets where each $S_i$ is a
  $q$-approximate interval of weight $w_i$, and an integer $t$.  We
  define a function $f$ such that, for each integer
  $i \in [m] \cup \{0\}$ and set
  $B \subseteq \{i+1, i+2 \ldots, i+2q\}$, $f(i,B)$ is the smallest
  number $t'$ for which there is a set $I' \subseteq [n]$ such that:
  \begin{enumerate}
  \item for each $a, b \in I'$, $a \neq b$,
    $S_a \cap S_b = \emptyset$,
  \item $\bigcup_{\ell \in I'}S_\ell = [i] \cup B$, and
  \item $\sum_{\ell \in I'}w_\ell = t'$.
  \end{enumerate}
  If these conditions cannot be met, then we let ${f(i,B) = \infty}$.  As
  a base case, we observe that $f(0,\emptyset) = 0$.  Consider
  integers $i, j \in [m]$, $i \leq j$, sets
  $B' \subseteq \{i+1, \ldots, i+2q\}$ and
  $B'' = \{j+1, \ldots, j+2q\}$, and a set $S_\ell$. We say that
  $(i,j,B',B'', S_\ell)$ is \emph{consistent} if:
  \begin{enumerate}
  \item $S_\ell \cap ([i] \cup B') = \emptyset$, and
  \item $S_\ell \cup ([i] \cup B') = [j] \cup B''$.
  \end{enumerate}
  In other words, if $(i,j,B',B'',S_\ell)$ is consistent and we have a
  family of sets whose union is $[i] \cup B'$, then we can extend this
  family with set $S_\ell$ so the union of the extended family is
  $[j] \cup B''$. Now we express $f(i,B)$ recursively as follows (we
  consider $j \in [m]$ and $B'' \subseteq \{j+1, \ldots, j+2q\}$):
  \begin{align*}
    f(j,B'') = 
    \min_{\substack{(i,j,B',B'',S_\ell)\\ \text{is consistent}}} f(i,B') + w_\ell,
  \end{align*}
  and we let $f(j,B'') = \infty$ if there is no consistent $(i,j,B',$
  $B'',S_\ell)$ to consider.  Using this recursive formulation and
  standard dynamic programming techniques, we can compute
  $f(m,\emptyset)$ in time $\mathcal{O}^*(2^{4q})$; this running time stems from
  the fact that there are $\mathcal{O}^*(2^{2q})$ arguments that $f$ can take, and
  our recursive formula requires us to consider $\mathcal{O}^*(2^{2q})$ consistent
  tuples.  We accept if $f(m,\emptyset) \leq t$.  The algorithm is
  correct because $\calS$ consists of $q$-approximate intervals.
\end{proof}

\section{Complexity of Finding Clone Structures}\label{sec:clone-structures}

For the case of perfect clones, \citet{elk-fal-sli:c:decloning}
characterized what set families can appear as perfect-clone structures
and provided simple algorithms for computing such structures. Similar
algorithms also work for independent clones, yielding the
following. 
\begin{proposition}
  There are polynomial-time algorithms that, given an election~$E$ and an
  integer~$w$, (a)~test if a given set of candidates forms an
  independent clone recognized by at least $w$ voters, and (b)~compute
  a clone structure consisting of such independent clones.
\end{proposition}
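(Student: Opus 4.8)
The plan is to handle the two parts separately, both by reducing to the well-understood structure of perfect clones.

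For part (a), given a candidate set $X$ and integer $w$, I would simply iterate over all $n$ voters and, for each voter $v$, check whether $v$ ranks the members of $X$ consecutively; this is done by computing $\max_{x \in X}\pos_v(x) - \min_{x \in X}\pos_v(x)$ and testing whether it equals $|X|-1$. I then count the number of voters passing this test and accept if and only if the count is at least $w$. This clearly runs in polynomial time.

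For part (b), the key observation is that $X$ is a $w$-independent clone exactly when, restricting attention to the voters who rank $X$ consecutively, $X$ is a perfect clone of the corresponding subelection---but since we only require \emph{at least} $w$ such voters to exist (and we are free to pick which ones), the condition decouples from any global structure. Concretely, for each voter $v$, let $\mathcal{I}_v$ be the family of all sets of candidates that $v$ ranks consecutively; this is the (perfect) clone structure of the single-voter election $(C,\{v\})$, which by the results of \citet{elk-fal-sli:c:decloning} has polynomial size and is computable in polynomial time (it consists of all ``intervals'' of $v$'s order). Then $X$ is a $w$-independent clone if and only if $X \in \mathcal{I}_v$ for at least $w$ values of~$v$. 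Hence the $w$-independent-clone structure is precisely $\{X : |\{v : X \in \mathcal{I}_v\}| \geq w\}$, which we obtain by computing all the $\mathcal{I}_v$, forming their multiset union, and keeping the sets with multiplicity at least~$w$. Since each $|\mathcal{I}_v|$ is polynomial in $m$, the total running time is polynomial.

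I do not anticipate a serious obstacle here; the statement is deliberately simple and its main point is the contrast with the $\np$-hardness of recognizing approximate clones. The one place requiring a little care is justifying that each single-voter clone structure $\mathcal{I}_v$ has polynomial size and can be enumerated efficiently---this is immediate from the fact that for a single linear order the clones are exactly the contiguous blocks, of which there are $\binom{m+1}{2} = \mathcal{O}(m^2)$, but it is worth stating explicitly so that the bound on the size of the output independent-clone structure is clear. Everything else is bookkeeping: hashing the sets $X$ (represented, say, as sorted candidate lists or bitmasks) to tally their multiplicities across voters.
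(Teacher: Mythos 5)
Your proof is correct and follows essentially the same route as the paper: the paper's argument likewise enumerates, for each voter $v$ and each pair of positions $i \leq j$, the interval of candidates that $v$ ranks between positions $i$ and $j$ (i.e., your $\mathcal{I}_v$) and then counts how many voters rank that set consecutively, keeping the sets with count at least $w$. Your part (a) and the explicit remark that each $\mathcal{I}_v$ has size $\mathcal{O}(m^2)$ are just slightly more detailed versions of what the paper leaves implicit.
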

\begin{proof}
To compute an independent-clone
structure for a given election $E=(C,V)$ where each clone has to be recognized by at least
$w$ voters, we check 
for each voter $v \in V$ and each $i, j$ such that
$1 \leq i \leq j \leq |C|$ whether the candidates that $v$
ranks at positions between~$i$ and~$j$ form an independent clone; to
do so, we look at all the votes and check if at least $w$ of them rank
these candidates consecutively. 
\end{proof}

In contrast, deciding if there is a single $q$-approximate
clone of a certain size is $\np$-complete.

\begin{theorem}
  \label{thm:findqclones-npc}
  Deciding if there exists a $q$-approximate clone of size exactly~$b$ in a
  given election is $\np$-complete.
\end{theorem}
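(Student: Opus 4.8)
The plan is to prove \NP-hardness by a reduction from a suitable \NP-hard problem; membership in \NP\ is immediate, since a candidate subset $X$ of size exactly~$b$ can be verified against the $q$-approximate clone condition in polynomial time by checking, for each voter, that $\max_{x\in X}\pos_v(x)-\min_{x\in X}\pos_v(x)\le q+b-1$. For the hardness I would reduce from \textsc{RX3C}: given an instance $(X,\calS)$ with $|X|=3k$ and each element in exactly three triples, the idea is to build an election in which a $q$-approximate clone of the prescribed size must ``pick out'' a sub-family of $\calS$ that behaves like an exact cover. Concretely, I envision one candidate per element of~$X$ (an \emph{element candidate}) plus additional \emph{separator} or \emph{padding} candidates, and one (or a constant number of) voter(s) per triple $S\in\calS$ whose preference order places the three element candidates of~$S$ consecutively, flanked by padding candidates that force any other three element candidates to be spread out by more than the allowed window $q+b-1$. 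Setting $b=3$ (so the target clone has exactly three element candidates) and choosing~$q$ relative to the padding lengths, a set $X'$ of three element candidates is a $q$-approximate clone if and only if $X'$ is one of the triples in~$\calS$ — this is essentially a gadget realizing ``is a member of $\calS$.''

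That alone only recognizes a single triple, which is not yet \textsc{RX3C}; the statement we must prove, however, only asks for the existence of \emph{one} $q$-approximate clone of size exactly~$b$, so the real task is to make the size parameter~$b$ encode a whole cover. The cleaner route is therefore to let $b=|X|=3k$ and design the voters so that a set of $3k$ element candidates forms a $q$-approximate clone exactly when it can be partitioned, within every voter's ranking, into $k$ blocks each of which is a triple of~$\calS$ appearing consecutively in that voter. To do this I would, for each voter, lay out the element candidates grouped according to a fixed exact-cover solution candidate — but since we cannot presuppose the solution, the standard trick is to have voters corresponding to the \emph{sets} in $\calS$: place, in each such voter $v_S$, the three members of $S$ together and well-separated (by $\Theta(q)$ padding candidates) from everything else, so that the only way a $3k$-subset can have small spread in \emph{all} these voters is to be consistent with a family of pairwise-disjoint triples covering~$X$. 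The $q+b-1$ window with $b=3k$ gives exactly enough room for $k$ triple-blocks separated by short gaps but not enough to tolerate a ``broken'' triple, provided padding lengths are tuned to $q$.

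The main obstacle, and the part needing the most care, is the \emph{forward} direction's quantitative bookkeeping: I must choose the number of padding candidates between consecutive triple-blocks, the total number of padding candidates, and the value of~$q$ so that (i) a genuine exact cover yields a $q$-approximate clone — every voter sees the chosen $3k$ element candidates within a window of length $q+3k-1$ — while (ii) \emph{any} $3k$-element subset that is \emph{not} induced by an exact cover forces some voter to exceed that window. Point (ii) is where one typically argues that if the chosen subset ``splits'' a triple $S$ in voter $v_S$ (takes some but not all of $S$'s members, or includes element candidates not grouped as triples of $\calS$), then $v_S$ must interleave at least one full padding block among the chosen candidates, blowing the spread past the threshold. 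A delicate point is ruling out that padding or separator candidates themselves could be chosen into the clone; this is handled by making~$b$ exactly $3k$ and giving each voter enough padding so that no $3k$-subset containing a padding candidate can be squeezed into the window — or, more robustly, by adding a few ``anchor'' voters that rank all padding candidates far away from all element candidates, forcing any size-$(3k)$ $q$-approximate clone to consist purely of element candidates. Once the arithmetic is pinned down, correctness in both directions and the polynomial time bound of the reduction are routine, and I would also remark (via \Cref{prop:RX3C-7colorable} if convenient) on any structural simplifications of the \textsc{RX3C} instance that streamline the padding construction.
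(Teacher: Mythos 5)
Your $\np$-membership argument is fine, but the hardness reduction has a genuine gap that I do not think can be patched within the approach you describe. The central problem is that the $(q,b)$-approximate clone condition, $\max_{x\in X}\pos_v(x)-\min_{x\in X}\pos_v(x)\le q+b-1$, constrains only the two extreme positions of the chosen set in each vote; it is completely insensitive to how the members are arranged inside that window. So no choice of padding can make the condition enforce that the chosen $3k$ element candidates ``can be partitioned into $k$ consecutive triple-blocks'' in every vote: once the leftmost and rightmost chosen candidates fit inside the window, a broken triple costs nothing, because the budget of $q$ intruders is a single global budget on the window width, not a per-gap budget. Worse, with $b=3k$ and exactly one candidate per element of $X$ (and your anchors preventing padding candidates from joining), the only admissible clone is the set of \emph{all} element candidates; whether that one fixed set satisfies the spread condition in every vote is determined entirely by how you lay out the votes, so the construction would have to already know whether an exact cover exists in order to decide the layout --- there is no residual combinatorial choice for the clone to encode a cover. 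Your first gadget (with $b=3$) correctly recognizes membership in $\calS$ but, as you yourself note, does not yield a hard decision problem.

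The paper instead reduces from \textsc{Independent Set}, which matches the expressive power of the spread condition exactly: being a size-$b$ $q$-approximate clone is equivalent to a conjunction of \emph{pairwise} constraints (``no voter places two clone members more than $q+b-1$ apart''), so such a clone is precisely an independent set of size $b$ in the conflict graph whose edges are the pairs that some voter separates too much. Concretely, for each edge $f=\{u_i,u_j\}$ a voter $w_f$ pushes $u_i$ and $u_j$ to opposite ends of her ranking, and for each vertex $u_i$ a voter $v_i$ pushes the two dummies far from $u_i$ so that the clone must consist of vertex candidates; setting $b=k$ and $q=r-k$ makes the allowed window $r-1$ while conflicting pairs sit at distance $r+1$. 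If you want to keep a cover-type source problem you would need to redesign the candidate set so that the \emph{selection} itself, rather than any internal arrangement, carries the information, but the pairwise nature of the constraint makes \textsc{Independent Set} the natural fit here.
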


\begin{proof}
  Containment in~$\np$ is clear, since we can verify in polynomial
  time whether a given set of candidates is a $q$-clone in the
  election.
  We present a reduction from the \textsc{Independent Set} problem that, given an undirected graph $G=(U,F)$ and an integer~$k$, asks whether $G$ contains an independent set of size $k$. 
  We let $U=\{u_1,\dots,u_r\}$.

  We  introduce 
  two dummies, $d_1$ and~$d_2$, and for each vertex~$u_i \in U$ we
  create a \emph{vertex candidate} with the same name.   
  We define
  $V=\{v_i \mid i \in [r] \} \cup \{w_f \mid f \in F\}$ as the
  set of voters, with preference orders
  as below (recall the convention about listing sets in
  preference orders): 
  \begin{align*}
    v_{i} &: d_1 \pref d_2 \pref U \setminus \{u_i\} \pref u_i,
    \\
    w_f &: d_2 \pref u_i \pref d_1 \pref U \setminus \{u_i,u_j\} \pref u_j
  \end{align*}
  where $f=\{u_i,u_j\} \in F$ is an edge (we take $i<j$).  We finish
  our construction by setting $b=k$ and $q=r-k$.

  To see the correctness of the reduction, assume first that $Q$ is a
  $q$-clone of size $k$ in our
  election~$E$.  
  Notice that for each vertex candidate $u_i$, both $d_1$ and $d_2$
  are at distance at least~$r+1$ from $u_i$ in some preference order
  (namely, that of $v_i$). Consequently, and since $k>2$, we get that
  $Q \subseteq U$. We claim that the $k$ vertices in~$Q$ form an
  independent set in~$G$.  Indeed, assuming $u_i,u_j \in Q$ are
  connected by an edge of~$G$, we know that $u_i$ and~$u_j$ are at a
  distance~$r+1$ in the preference order of voter~$w_f$, for
  $f=\{u_i,u_j\} \in F$; a contradiction. Conversely, it is
  straightforward to check that an independent set of size~$k$ in~$G$
  forms a $q$-clone in~$E$.
\end{proof}

Fortunately, there are ways to circumvent this hardness result. Below
we show an algorithm that decides if a given election has a size-$b$
$q$-approximate clone in $\fpt(q)$ time.

\begin{theorem}
  \label{thm:findqclones-fpt}
  There is an algorithm that given an election~$E$ and integers $b$ and~$q$
  decides in $\fpt(q)$ time if $E$ contains a size-$b$ $q$-approximate
  clone and, if so, returns one.
\end{theorem}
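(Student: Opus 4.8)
The plan is to enumerate, for one fixed reference voter, a block of $q+b$ consecutive candidates that is guaranteed to contain the clone we are looking for, and then to pin the clone down inside that block by a bounded search tree of depth at most~$q$.

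First I would dispose of the degenerate cases: if $b>m$ the answer is ``no'', and if $m\le q+b-1$ then every $b$-element set $X$ trivially satisfies $\max_{x\in X}\pos_v(x)-\min_{x\in X}\pos_v(x)\le m-1\le q+b-1$ for every voter~$v$, so the answer is ``yes'' and any $b$ candidates form the clone. Hence assume $m\ge q+b$. Fix an arbitrary voter $v_1$ and list its candidates as $c^{1}_1,\dots,c^{1}_m$. If $X$ is a size-$b$ $q$-approximate clone, then in the order of~$v_1$ it is contained in the block $J_i:=\{c^{1}_i,\dots,c^{1}_{i+q+b-1}\}$ of $q+b$ consecutive candidates, where $i$ is the position of the $v_1$-topmost member of~$X$ (decreased to $m-q-b+1$ if larger, so the block stays inside the order). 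So I would loop over all $i\in\{1,\dots,m-q-b+1\}$ and, for each block $J=J_i$, decide whether some $b$-subset of~$J$ is a $q$-approximate clone; equivalently, whether we can delete exactly $q=|J|-b$ candidates from~$J$ so that the remainder is a $q$-approximate clone.

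For a fixed block~$J$ I would run a search tree maintaining a deletion set $D\subseteq J$ (initially empty) and the tentative clone $X=J\setminus D$. At a node: if $\max_{x\in X}\pos_v(x)-\min_{x\in X}\pos_v(x)\le q+b-1$ for all voters~$v$, stop successfully, after padding~$D$ up to size~$q$ with arbitrary members of~$X$ — this only decreases those quantities and makes $|X|$ exactly~$b$. Otherwise pick any violated voter~$v$; if $|D|=q$ this node is a dead end, and otherwise branch on adding to~$D$ either the $v$-topmost or the $v$-bottommost member of the current~$X$. Each branch enlarges~$D$ by one, so the tree has depth at most~$q$ and $\mathcal{O}(2^q)$ leaves; with all positions $\pos_v(c)$ precomputed, every node is handled in polynomial time, so together with the $\mathcal{O}(m)$ choices of block the whole algorithm runs in $\mathcal{O}^*(2^q)$ time, i.e.\ $\fpt(q)$ with no dependence on~$b$.

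Soundness is immediate. For completeness, given a size-$b$ $q$-approximate clone $X^\star$, take the iteration with $X^\star\subseteq J$ and set $D^\star=J\setminus X^\star$ (so $|D^\star|=q$); an exchange argument shows the search always has a branch keeping $D\subseteq D^\star$: whenever $D\subseteq D^\star$ and voter~$v$ is violated, $X^\star\subsetneq X$ has a strictly smaller $v$-span than~$X$, so the two $v$-extreme members of~$X$ cannot both lie in~$X^\star$, hence at least one is in~$D^\star$. Since $|D|$ strictly increases and never exceeds $|D^\star|=q$, this path reaches a successful node in at most~$q$ steps (the dead-end case is impossible there, as $D\subseteq D^\star$ with $|D|=q$ forces $D=D^\star$ and $X=X^\star$). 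The only real obstacle is the running time: the obvious approach of fixing the block and then trying all $\binom{q+b}{b}$ of its $b$-subsets is merely $\xp$ in~$q$. The whole point is to notice that fixing the block reduces the task to deleting at most~$q$ candidates, after which the standard ``branch on a member of some violated constraint'' technique produces the $2^q$ factor; verifying the exchange argument for completeness is the other thing one must be careful about, but it is routine.
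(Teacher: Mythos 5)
Your proposal is correct and follows essentially the same strategy as the paper's proof: fix a reference voter to confine the clone to a window of at most $q+b$ consecutive candidates, then run a depth-$q$ bounded search tree that branches on which of the two extremal candidates of a violating voter is an intruder, with the exchange argument you give matching the paper's justification that at least one extreme must be removed. The only (cosmetic) difference is that you enumerate fixed-size windows of length $q+b$ while the paper guesses the top and bottom clone members together with the exact offset $q'\le q$; both yield the same $\mathcal{O}^*(2^q)$ bound.
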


\begin{proof}
  Let election $E = (C,V)$ and integers $b$ and $q$ be our input.  We
  give an algorithm that either finds a size-$b$
  $q$-approximate clone~$Q$ within~$E$ or concludes that no such
  clone exists. To do so, we employ the bounded search tree
  technique. The algorithm works as follows.

  First, we fix an arbitrary voter~$v \in V$ and guess candidates
  $c', c'' \in C$ such that, supposedly, $v$ ranks $c'$ highest and
  $c''$ lowest among the members of~$Q$. Since $Q$ is to be a size-$b$
  $q$-approximate clone, there must be some $q' \leq q$ such that:
  \[
    \pos_v(c'') - \pos_v(c') + 1 = b+q',
  \]
  that is, $v$ must rank exactly $b+q'$ candidates between~$c'$ and~$c''$ (inclusively). If this is not the case, then our guess
  was incorrect and we terminate the current computation path in the
  search tree without producing a result.  We refer to each candidate
  $d \in C \setminus Q$ such that $v \colon c' \pref d \pref c''$ as
  an \emph{intruder}. We must have exactly $q'$ intruders.  We form a
  set
  \[
    Q_0=\{c \in C \mid c' \pref_v c \pref_v c'' \} \cup \{c',c''\}.
  \]
  Naturally, $Q \subseteq Q_0$, and $Q_0 \setminus Q$ is exactly the set of
  intruders.

  We next build a sequence 
  $Q_0 \supseteq Q_1 \supseteq Q_2 \supseteq \cdots$ of sets where each $Q_i$
  has one intruder fewer than $Q_{i-1}$, and all constructed sets contain~$Q$.
  Given $Q_{i-1}$, we compute $Q_{i}$ as follows: If $|Q_{i-1}| < b$
  then we terminate this computation path as no subset of $Q_{i-1}$
  can be a size-$b$ $q$-approximate-clone.
  Otherwise, we check if 
  there is a voter~$u \in
  V$ 
  for which
  $u$'s most- and least-preferred candidates
  in~$Q_{i-1}$, denoted $a'$ and~$a''$, satisfy
  \[
    \pos_u(a'')-\pos_u(a') +1 > b+q.
  \]  
  If no such voter~$u$ exists, then $Q_{i-1}$ is a
  $q$-approximate clone, and all its size-$b$ subsets are
  $q$-approximate clones as well. Thus, we output one of them arbitrarily,
  and terminate with answer \emph{yes}.  Otherwise, it is clear that
  at least one of~$a'$ and~$a''$ is an intruder that we need to remove. We guess the intruder $\hat{a} \in
  \{a',a''\} \setminus Q$, and set~$Q_i=Q_{i-1} \setminus
  \{\hat{a}\}$.  This finishes the description of the algorithm.
  
  The algorithm terminates at latest after forming the set
  $Q_{q'+1}$ which must have fewer than
  $b$ candidates.  The number of guesses that the algorithm makes
  is therefore at most
  \[
    \underbrace{|C| \cdot (q+1)}_{\text{guesses of $c'$ and $c''$}}
    \cdot \underbrace{\phantom{(}2^q\phantom{)}}_{\substack{\text{guesses of} \\ \text{the intruders}}}.
  \]
  Since each set $Q_{i}$ can be obtained from $Q_{i-1}$ in time
  $\mathcal{O}(|C| \cdot |V|)$, we conclude that the algorithm runs in $\fpt(q)$
  time. The correctness follows from the remarks made in the
  description of the algorithm.
%

\end{proof}

In fact, the algorithm that we use in the above proof
can produce a succinct description of all size-$b$ $q$-approximate
clones that are subsets of a given set $S$ of candidates such that
$|S| \leq q+b$. To do so, whenever the algorithm is about to terminate with a \emph{yes} answer,
it outputs the then-considered set $Q_{i-1}$ indicating that each
of its size-$b$ subsets is a $q$-approximate clone, and proceeds to
the next path within the search tree instead of terminating.

\begin{corollary}\label{cor:all-q-clones}
  There is an algorithm that given an election $E = (C,V)$, integers
  $b$ and $q$, and a subset $S$ of at most $b+q$ candidates from $C$
  outputs in $\fpt(q)$ time a family of sets $S_1, S_2, \ldots$ such
  that a set $Q \subseteq S$ is a size-$b$ $q$-approximate clone if
  and only if it is a subset of some $S_i$.
\end{corollary}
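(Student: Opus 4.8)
The plan is to invoke the proof of \Cref{thm:findqclones-fpt} essentially verbatim, but modified so that the algorithm, instead of halting the first time it can certify that some set $Q_{i-1}$ has all size-$b$ subsets being $q$-approximate clones, records $Q_{i-1}$ into an output list and continues exploring the remaining branches of the bounded search tree. We run this modified procedure with the distinguished voter $v$ fixed to be any voter and with the set $S$ playing the role of the initial bracket: more precisely, since $|S| \le b+q$, the set $S$ itself can serve as a valid $Q_0$ (the ``intruders'' are now just the candidates of $S \setminus Q$), so we can skip the guessing of $c'$, $c''$ and simply take $Q_0 := S$. From there we run the same peeling process, branching on which of the two extreme members $a', a''$ of $Q_{i-1}$ (for the violating voter $u$) to discard, and whenever we reach a set $Q_{i-1}$ that no voter witnesses as violating the $q$-approximate condition, we append $Q_{i-1}$ to the output list $S_1, S_2, \ldots$ and move on.

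The correctness argument has two directions. First, every set $S_i$ that we output is, by construction, a set all of whose size-$b$ subsets are size-$b$ $q$-approximate clones (this is exactly the termination condition analyzed in the proof of \Cref{thm:findqclones-fpt}); hence any $Q \subseteq S_i$ with $|Q| = b$ is a size-$b$ $q$-approximate clone. Conversely, suppose $Q \subseteq S$ is a size-$b$ $q$-approximate clone. I would argue by the same invariant as in the original proof: starting from $Q_0 = S \supseteq Q$, at every peeling step where a violating voter $u$ exists, at least one of $u$'s extreme candidates $a', a''$ in $Q_{i-1}$ lies outside $Q$ (because $Q$ itself satisfies the $q$-approximate condition for $u$, so $u$ cannot rank both extreme members of $Q$ within a window of size $b+q$ while also having the wider window of $Q_{i-1}$ fail), and the branch that removes that candidate keeps the invariant $Q \subseteq Q_i$ alive. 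Following that branch, we eventually reach a set $Q_j \supseteq Q$ for which no violating voter exists — and that set gets output. So $Q$ is a subset of some $S_i$, as required.

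For the running time: the search tree is exactly the one from the proof of \Cref{thm:findqclones-fpt}, except that we no longer guess $c'$ and $c''$ (saving the $|C|\cdot(q+1)$ factor) and instead start from the fixed set $S$ of size at most $b+q$, whose number of ``intruders'' relative to any target clone is at most $q$ (since $|S| - b \le q$); thus the branching factor is $2$ and the depth is at most $q$, giving $2^q$ leaves, each processed in polynomial time, plus polynomial work per output set. Hence the total running time is $\fpt(q)$ and the number of sets $S_i$ produced is $\mathcal{O}^*(2^q)$.

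The main obstacle — and it is a minor bookkeeping one rather than a conceptual one — is making precise that taking $Q_0 := S$ is legitimate even though in the original proof $Q_0$ was defined as a contiguous block in voter $v$'s order with a guaranteed intruder count of exactly $q' \le q$. Here $S$ need not be contiguous in anyone's preference order, but that is irrelevant: the peeling loop in the proof of \Cref{thm:findqclones-fpt} only ever uses that $Q_0$ is a superset of the sought clone with at most $q$ more elements than $b$, and $|S| \le b+q$ gives exactly that. I would spend a sentence in the write-up noting this, and otherwise cite the earlier proof for all the structural claims.
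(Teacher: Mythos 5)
Your proposal is correct and follows essentially the same route as the paper: the paper likewise obtains the corollary by rerunning the bounded-search-tree algorithm of Theorem~\ref{thm:findqclones-fpt} on the given set $S$ (which can serve as $Q_0$ since $|S|\leq b+q$), and, instead of halting at the first set $Q_{i-1}$ with no violating voter, outputting that set and continuing with the remaining branches. Your additional remarks---that contiguity of $Q_0$ in $v$'s order is never used by the peeling loop, and that at each violating step at least one extreme candidate lies outside any target clone $Q$---are exactly the observations the paper leaves implicit.
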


\section{Complexity of Finding Clone Partitions}
\label{sec:partitions}

In this section we discuss the complexity of our \textsc{Clone
  Partition} family of problems for various types of clones. While for
perfect clones the problem is easily seen to be solvable in polynomial
time (indeed, this is essentially a folk result), for imperfect
(i.e., approximate, independent, and subelection) clones the problem
is $\np$-complete. We explore to what extent these intractability
results can be circumvented using parameterized algorithms.

\begin{table}
  \centering
  \begin{tabular}{@{}l@{\hspace{4pt}}c@{\hspace{4pt}}c@{\hspace{4pt}}c@{}}
    \toprule
    & Approximate  & Independent & Subelection\\
    & clones  & clones & clones \\
    \midrule
    $b=2$ & $\p$ \shortciteprop{prop:b=2}           & $\p$ \shortciteprop{prop:b=2}   & $\np$-c \shortcitethm{thm:subelectin_b2_npc}\\
    $b =  3$ & $\np$-c  \shortcitethm{thm:approxclones-npc} & $\np$-c  \shortcitethm{thm:indep_clones_wbn_npc}   & $\np$-c \shortcitethm{thm:subelectin_b2_npc}\\
    \midrule    
    \multirow{2}{*}{param. $t$} & \multirow{2}{*}{$\mathsf{paraNP}$-h \shortcitethm{thm:approxclones-npc}} 
    & $\xp(t)$ \shortciteprop{prop:indep_xp_t} & $\xp(t)$ \shortciteprop{prop:subelection_tractable} \\
        &  & $\mathsf{W}[1]$-h \shortcitethm{thm:indep-clone-part-Whard-t}    & $\mathsf{W}[1]$-h \shortcitethm{thm:subelection_whard} \\
    \midrule
    param. $q$   & $\fpt(q)$ \shortcitethm{thm:approxclone_fpt_q}      & --                 & -- \\
    \midrule
    $w=2$ & --              & $\np$-c \shortcitethm{thm:indep_clones_wbn_npc}    & $\p$ \shortciteprop{prop:subelection_tractable}\\
    \multirow{2}{*}{param. $w$} & \multirow{2}{*}{--}         & \multirow{2}{*}{$\np$-c \shortcitethm{thm:indep_clones_wbn_npc}}    & $\xp(w)$ \shortciteprop{prop:subelection_tractable} \\
     & & & $\mathsf{W}[1]$-h \shortcitethm{thm:subelection_whard} \\
    \midrule
    param. $n$  & $\mathsf{paraNP}$-h \shortcitethm{thm:ApproxClone-NPh-constant-voters}        & $\mathsf{paraNP}$-h \shortcitethm{thm:indep_clones_wbn_npc}                 & $\fpt(n)$ \\
    \bottomrule
  \end{tabular}
  
  \caption{\label{tab:results}Computational complexity of finding partitions into
    imperfect clones. 
    The cells marked with ``--'' refer to parameterizations undefined for the given problem.
    $\np$-c, 
    $\mathsf{W}[1]$-h, and 
    $\mathsf{paraNP}$-h
    stand for $\np$-completeness, 
    $\mathsf{W}[1]$-hardness, and 
    $\paranp$-hardness, respectively.
    }
\end{table}

\begin{proposition}
  \textsc{Perfect-Clone Partition} is in $\p$.
\end{proposition}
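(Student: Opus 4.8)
The plan is to reduce \textsc{Perfect-Clone Partition} to a one-dimensional interval-partitioning problem solvable by a simple dynamic program. The key structural observation is that every $b$-perfect clone of $E = (C,V)$ is an interval in the preference order of each voter: by definition, every voter ranks the members of a perfect clone on consecutive positions. Hence, fixing an arbitrary voter $v \in V$, any $b$-perfect-clone partition $\{C_1, \dots, C_t\}$ of $C$ splits the positions $1, \dots, m$ (where $m = |C|$) of $v$'s preference order into $t$ consecutive blocks, one per $C_i$. Conversely, any partition of $v$'s preference order into consecutive blocks, each of which is a $b$-perfect clone of $E$, is a valid $b$-perfect-clone partition. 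Thus it suffices to compute the minimum number of blocks over all such partitions of $v$'s order and compare it with~$t$.

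Concretely, first I would enumerate, for all pairs of positions $1 \le i \le j \le m$, the set $B_{ij}$ of candidates that $v$ ranks on positions $i$ through $j$; there are $\mathcal{O}(m^2)$ such sets. For each $B_{ij}$ I would test whether $j - i + 1 \le b$ and whether every voter in $V$ ranks the candidates of $B_{ij}$ consecutively; this takes $\mathcal{O}(mn)$ time per set, so the whole Boolean table of ``valid blocks'' is built in polynomial time. A set $B_{ij}$ is a valid block exactly when it is a $b$-perfect clone of $E$.

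Second, I would run a shortest-path-style dynamic program over the prefix lengths $0, 1, \dots, m$. Set $D[0] = 0$ and, for $j \ge 1$,
\[
  D[j] \;=\; 1 + \min\bigl\{\, D[i-1] \;:\; 1 \le i \le j \text{ and } B_{ij} \text{ is a valid block} \,\bigr\}.
\]
Since every singleton is a perfect clone, the minimum is always over a nonempty set (in fact $D[j] \le D[j-1] + 1$), and an easy induction shows that $D[j]$ equals the smallest number of blocks in a partition of the first $j$ positions of $v$ into valid blocks. Therefore $D[m]$ is exactly the minimum size of a $b$-perfect-clone partition of $C$, and we accept iff $D[m] \le t$. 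The dynamic program has $\mathcal{O}(m)$ states with $\mathcal{O}(m)$ transitions each, so together with the preprocessing the algorithm runs in time $\mathcal{O}(m^3 n)$, which is polynomial.

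There is essentially no hard step here; the only point that deserves a careful (one-line) argument is the equivalence between $b$-perfect-clone partitions of $E$ and partitions of a single fixed voter's preference order into consecutive blocks that are $b$-perfect clones, which is immediate from the definition of a perfect clone. (If one allows $V = \emptyset$, then every subset of $C$ is trivially a clone and the answer is ``yes'' iff $\lceil m / b \rceil \le t$; otherwise pick any $v \in V$ as above.)
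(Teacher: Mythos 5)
Your proof is correct and follows essentially the same approach as the paper: fix an arbitrary voter, observe that every perfect clone is an interval of that voter's preference order, test all $\mathcal{O}(m^2)$ intervals for being $b$-perfect clones, and find a minimum-cardinality partition into such intervals by dynamic programming. The paper merely packages the final step as an instance of its \textsc{Weighted Exact Cover} subroutine (Proposition~\ref{pro:wec-q-approx} with $q=0$), which amounts to the same interval DP you wrote out explicitly.
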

\begin{proof}
  Let our input consist of election $E = (C,V)$ over $m$ candidates, together with integers~$b$ and~$t$.
  We fix an arbitrary voter~$v$ and we rename the candidates so that
  \[
    v \colon c_1 \pref c_2 \pref \cdots \pref c_m.
  \]
  For each pair of integers $i, j \in [m]$, $i \leq j$, we test if
  $\{c_i, c_{i+1}, \ldots, c_j\}$ is a clone. We form an instance~$I$
  of \textsc{WEC} with ground set $[m]$ and set family~$\calS$ where
  for each perfect clone $\{c_i, c_{i+1}, \ldots, c_j\}$ the set
  $\{i, i+1, \ldots, j\}$ is put into~$\calS$ with weight $1$. We let
  $t$ be the upper bound on the total weight of a solution for~$I$.
  We solve $I$ using \Cref{pro:wec-q-approx} for $q=0$, and accept if
  and only if $I$ is a yes-instance.

\end{proof}

Before examining our three problems of  partitioning the candidate set into imperfect clones in Sections~\ref{sec:approx}, \ref{sec:indep}, and~\ref{sec:subelection}, we 
show that finding a partition into approximate or independent clones can be easily solved if the maximum allowed size of the clones is $b=2$.

\begin{proposition}
    \label{prop:b=2}
    \textsc{Approximate-} and \textsc{Independent-Clones Partition} are in~$\p$ for $b=2$.
\end{proposition}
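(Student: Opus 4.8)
The plan is to reduce both problems to \textsc{Maximum Matching}. Write $m = |C|$. Since $b = 2$, every part of a clone partition is either a singleton or a pair of candidates. A singleton is always a valid clone in the approximate case; in the independent case it is valid provided $w \le |V|$ (and if $w > |V|$ while $C \neq \emptyset$, the instance is trivially a no-instance, since not even singletons are recognized by enough voters). So first I would build the \emph{clone graph} $G = (C,E)$ on the candidate set, placing an edge between $a$ and $b$ exactly when $\{a,b\}$ is a valid clone of the relevant type: a $(q,2)$-approximate clone, i.e., $|\pos_v(a) - \pos_v(b)| \le q+1$ for every $v \in V$; or a $(w,2)$-independent clone, i.e., at least $w$ voters rank $a$ and $b$ on consecutive positions. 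Both conditions can be checked in polynomial time for each of the $O(m^2)$ candidate pairs, so $G$ is computable in polynomial time.

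Next I would observe that a clone partition that uses $p$ pairs necessarily uses $m - 2p$ singletons and hence consists of exactly $m - p$ parts. Consequently, a clone partition of size at most $t$ exists if and only if $G$ has a matching of size at least $m - t$. Indeed, from a clone partition $\calC$ with $|\calC| \le t$ we extract the pairs it uses; these are pairwise disjoint valid pairs, hence a matching in $G$ of size $\ge m - t$. Conversely, given a matching $M$ of $G$ with $|M| \ge m - t$, the family consisting of the pairs in $M$ together with one singleton $\{c\}$ for each candidate $c$ unmatched by $M$ is a clone partition whose every part is a valid clone, and whose size is $m - |M| \le t$.

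Finally, I would compute a maximum matching in $G$ in polynomial time (e.g., via Edmonds' blossom algorithm) and accept if and only if its size is at least $m - t$. There is essentially no obstacle beyond this bookkeeping; the only points requiring care are the identity ``number of parts $=m-(\text{number of pairs used})$'' and the trivial handling of the degenerate case $w > |V|$ in the independent variant.
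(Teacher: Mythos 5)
Your proposal is correct and follows essentially the same route as the paper: build the graph on $C$ whose edges are the valid size-two clones, compute a maximum matching $M$, and accept iff the resulting partition size $|C|-|M|$ is at most $t$ (the paper writes this as $|M|+(|C|-2|M|)\leq t$, which is the same quantity). Your explicit treatment of the degenerate case $w>|V|$ for singletons in the independent variant is a minor extra care the paper leaves implicit.
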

\begin{proof}
    Let our input be $E=(C,V)$ with integers $b$, $q$ or~$w$ (depending on the problem), and $t$. 
    We create a graph~$G$ over~$C$ where candidates~$c$ and~$c'$ are connected by an edge if they form a $q$-approximate clone or if they are seen as perfect clones by at least $w$ voters, depending on the problem; note that $G$ can be constructed in polynomial time. Now, we find a maximum-size matching~$M$ in~$G$. Clearly, our instance is a yes-instance exactly if $|M|+(|C|-2|M|)\leq t$.
    
\end{proof}

\subsection{Approximate Clones}
\label{sec:approx}
Focusing now on \textsc{Approximate-Clones Partition}, we present two strong intractability results that show the hardness of this problem even for very restricted instances.

We prove the first of these results,
Theorem~\ref{thm:approxclones-npc}, by a reduction from the following
$\np$-hard problem that we call \textsc{$k$-Partition Into Independent
  Sets}: given an undirected graph $G=(U,F)$, the task is to decide
whether $U$ can be partitioned into $k$ independent sets of
size~$\nicefrac{|U|}{k}$.  This problem is $\np$-hard for $k=3$, since
it is equivalent to the special case of \textsc{3-Coloring} where we
require each color class to be of the same size.  Moreover,
\textsc{$k$-Partition Into Independent Sets} is also $\np$-hard in the
case when $k=\nicefrac{|U|}{3}$, by a reduction from the
\triangulation\ problem~\citep{gar-joh:b:int} whose input is an
undirected graph $H$ and the task is to decide whether we can
partition its vertex set into cliques of size~$3$.  The construction
resembles the one used in the proof of
Theorem~\ref{thm:findqclones-npc} but uses not only two, but a set
of~$k$ dummies.  Setting $k=3$ in our reduction yields hardness for
the setting $t=4$, while setting $k=\nicefrac{|U|}{3}$ yields our
result for $b=3$.

\begin{restatable}[$\star$]{theorem}{thmapproxclonesnpc}
\label{thm:approxclones-npc}
\textsc{Approximate-Clone Partition} is $\np$-complete, even if $b=3$
or $t=4$.
\end{restatable}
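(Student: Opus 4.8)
The plan is to follow the recipe sketched above the statement: reduce from \textsc{$k$-Partition Into Independent Sets}, which as explained is $\np$-hard both for $k=3$ (where it is equitable $3$-colouring) and for $k=\nicefrac{|U|}{3}$ (via \triangulation). Take an instance $(G,k)$ with $G=(U,F)$, $U=\{u_1,\dots,u_r\}$, $k \mid r$, and write $s:=\nicefrac{r}{k}$ for the target size of each class; we may assume $r$ is large (pad $G$ with isolated vertices if needed). I build $E=(C,V)$ with $C=U\cup\{d_1,\dots,d_k\}$, where $d_1,\dots,d_k$ are dummies, and with $r+|F|$ voters: for each $i\in[r]$, a voter $v_i\colon d_1\pref\cdots\pref d_k\pref(U\setminus\{u_i\})\pref u_i$; and for each edge $f=\{u_a,u_b\}\in F$ with $a<b$, a voter $w_f\colon u_a\pref(U\setminus\{u_a,u_b\})\pref d_1\pref u_b\pref d_2\pref\cdots\pref d_k$. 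I set $b=s$, $q=r-s$ (so the approximation threshold $q+b-1$ equals $r-1$), and $t=k+\lceil k/b\rceil$. Then $k=3$ yields $t=4$ and $k=\nicefrac{r}{3}$ yields $b=3$, as required; this construction is essentially the one behind Theorem~\ref{thm:findqclones-npc} with the two dummies replaced by~$k$ of them.

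The core is a structural description of the size-$\le b$ $(q,b)$-approximate clones of $E$. No such clone can contain a vertex candidate $u_i$ together with a dummy $d_j$: voter $v_i$ ranks them at positions $j\le k$ and $r+k$, hence more than $r-1$ apart. Likewise, no such clone made of vertex candidates can contain two $G$-adjacent vertices $u_a,u_b$: voter $w_{\{u_a,u_b\}}$ ranks them at positions $1$ and $r+1$, again more than $r-1$ apart. Conversely, a short position count shows that every independent set of $G$ of size at most $b$ is a $(q,b)$-approximate clone --- in every $v_i$ it spans at most $r-1$ positions, and in every $w_f$ as well, the tight case being when the set contains the endpoint parked at position $r+1$, whose distance to the vertex ``bulk'' $U\setminus\{u_a,u_b\}$ (which sits at positions $2,\dots,r-1$) is exactly $r-1$ --- and that any block of at most $b$ consecutive dummies $d_c,\dots,d_{c+b-1}$ is a $(q,b)$-approximate clone too, since it spans at most $b\le r-1$ positions in every voter.

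Correctness is then a counting argument. In the forward direction, a partition of $U$ into $k$ size-$s$ independent sets gives $k$ vertex clones, and grouping $d_1,\dots,d_k$ into $\lceil k/b\rceil$ consecutive blocks gives $\lceil k/b\rceil$ further clones, for exactly $t$ clones in total. In the backward direction, take any $(q,b)$-approximate-clone partition of size at most $t$; by the structural claim each clone is either an independent set contained in $U$ or a set of dummies. The vertex clones have size at most $b=s$ and together cover all $r=ks$ vertices, so there are at least $k$ of them; the dummy clones have size at most $b$ and cover all $k$ dummies, so there are at least $\lceil k/b\rceil$ of them. Since $k+\lceil k/b\rceil=t$, both inequalities are equalities, so there are exactly $k$ vertex clones, each necessarily of size exactly $s$ --- i.e.\ a partition of $U$ into $k$ independent sets of size $s$. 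Membership in $\np$ is clear, since a clone partition can be verified in polynomial time.

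The step I expect to be most delicate is calibrating the edge gadget $w_f$: it must force any two $G$-adjacent vertices more than $q+b-1$ positions apart while \emph{not} forcing any independent set that far apart --- in particular, it must not separate the endpoint it places low from some far-away non-adjacent vertex. Putting the two endpoints at the extreme positions of $w_f$ fails precisely here, which is why $u_b$ goes to position $r+1$ with a single dummy $d_1$ wedged between the vertex bulk and $u_b$: position $r+1$ is the unique choice for which the endpoint-to-bulk span is exactly the threshold $r-1$ while the endpoint-to-endpoint span is $r$. I would also stress in the write-up that it is the size cap $b=s$ on clones that forces the vertex clones in the backward direction to be equitable, which is what lets the same construction deliver hardness both in the $t=4$ regime and in the $b=3$ regime.
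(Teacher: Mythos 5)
Your proof is correct and follows essentially the same strategy as the paper's: a reduction from \textsc{$k$-Partition Into Independent Sets} with dummy candidates, per-vertex voters that separate every dummy from every vertex candidate, per-edge voters that push adjacent vertices exactly one position beyond the threshold $q+b-1=r-1$ while keeping non-adjacent pairs within it, and the same counting argument to conclude. The only cosmetic difference is that the paper uses $b$ dummies (so they always form a single clone and $t=k+1$ without any padding), whereas you use $k$ dummies and need $r\ge 9$ (hence the isolated-vertex padding) to make $\lceil k/b\rceil=1$ in the $t=4$ regime; both variants work.
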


The next result uses an intricate reduction from RX3C, relying on \Cref{prop:RX3C-7colorable}.

\begin{restatable}[$\star$]{theorem}{thmApproxCloneNPhconstantvoters}
    \label{thm:ApproxClone-NPh-constant-voters}
    \textsc{Approximate-Clone Partition} is $\np$-hard even if the number of voters is constant ($n=17$).
\end{restatable}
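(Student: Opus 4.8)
The plan is to reduce from RX3C. Let $(X,\calS)$ be an instance of RX3C with $|X|=3k$; since each element of $X$ lies in exactly three sets, $|\calS|=3k$ as well. The first step is to apply \Cref{prop:RX3C-7colorable} to split $\calS$, in polynomial time, into seven color classes $\calS_1,\dots,\calS_7$ in which the triples are pairwise disjoint; this is what will let us lay out all of $\calS$ linearly inside a single voter's preference order, one color class at a time, without two set-gadgets ever overlapping.

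Next I would design an election $E=(C,V)$ with a constant number of voters---seven ``color'' voters $v_1,\dots,v_7$ (possibly doubled to keep internal orderings rigid) together with a few global voters, $17$ in total. The candidate set $C$ contains one \emph{element candidate} $c_x$ for each $x\in X$ and, for each $S\in\calS$, a private pack of filler candidates. The fillers are chosen so that each $S$ admits two ``legal'' ways to be turned into a group of some fixed size $b$: a \emph{selecting} group $\hat S$, containing $\{c_x\mid x\in S\}$ plus enough fillers to reach size $b$, and a \emph{discarding} group (or a small collection of groups) made up only of the remaining fillers of $S$. I would then fix $b$, the looseness $q$, and the bound $t$ so that $|C|=t\cdot b$ (forcing every group in a size-$t$ partition to have size exactly $b$), and so that in the voter $v_j$ of its own color each legal group of a set $S\in\calS_j$ is a consecutive block, while in every other voter that group still has span at most $q+b-1$---for this the filler pack of $S$ has to be distributed over the ``dump'' regions of the other voters precisely so as to keep this span bounded. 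The global voters add a bandwidth-type constraint that glues the seven color blocks together and forces the element candidates to be consumed one gadget at a time.

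With the construction in place, the forward direction should be routine: from an exact subcover $\calS'\subseteq\calS$ (so $|\calS'|=k$ and its triples partition $X$) build the partition that uses $\hat S$ for every $S\in\calS'$ and the discarding group(s) for every $S\notin\calS'$; by construction each of these is a $(q,b)$-approximate clone in all $17$ voters, and the total count is exactly $t$. For the converse, take any partition of $C$ into at most---hence exactly---$t$ approximate clones, each of size $b$, and prove a \emph{rigidity lemma}: every group of the partition must coincide with one of the legal gadget groups, with no group mixing candidates of different sets, mixing different colors, or straddling a block boundary. Granting this, the groups of type $\hat S$ use up precisely the element candidates of the corresponding sets; since all $3k$ element candidates are covered and these groups are pairwise disjoint, the sets $S$ with $\hat S$ in the partition form an exact cover of $X$.

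The main obstacle is the rigidity lemma. The delicate point is to choose the ``dump'' arrangements in the seven color voters and the behavior of the global voters so that the \emph{only} $b$-element subsets of $C$ having span at most $q+b-1$ in all $17$ voters are the intended gadget groups---in particular so that no approximate clone can pick up a handful of candidates from each of several gadgets, live across two color blocks, or cross a block boundary. The $7$-colorability guarantee of \Cref{prop:RX3C-7colorable} is essential here precisely because it lets each color region of each voter be realized as a conflict-free concatenation of gadget blocks; everything else is careful---but, once the layout is pinned down, essentially mechanical---bookkeeping of positions and spans.
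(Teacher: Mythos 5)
Your outline matches the paper's strategy at the skeleton level: a reduction from RX3C, \Cref{prop:RX3C-7colorable} to lay out each colour class as a conflict-free concatenation of per-set gadgets, a ``select vs.\ discard'' choice for each $S\in\calS$ realized by filler candidates, and the counting trick $|C|=t\cdot b$ to force every clone to have size exactly $b$. But the theorem's entire content is the rigidity lemma you defer, and your sketch does not supply the mechanisms that make it true; without them the converse direction is unproven. Concretely, the paper's proof rests on three devices you do not identify. First, each colour class $i$ gets a \emph{pair} of voters $w_i,w_i'$ that list the gadget blocks $Z(S_i^1),\dots,Z(S_i^{\nu_i})$ in opposite orders (and the triple candidates $t(S_i^j)$ likewise in opposite orders); the intersection of ``within distance $b+q$ of $t(S_i^j)$ in $w_i$'' and the same condition in $w_i'$ pins $\pi(t(S_i^j))\cap(A\cup X)$ down to exactly $Z(S_i^j)$. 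This forward/backward intersection argument is the heart of the rigidity claim, and ``possibly doubled to keep internal orderings rigid'' does not capture it. Second, the global voters $v,v'$ order the dummy blocks $D_1,\dots,D_p$ once in natural order and once permuted by $\varphi(i)=i(b+2)\bmod(p+1)$ for a prime $p+1>(b+2)^2$; this is what prevents any approximate clone from straddling two dummy blocks, and some such quantitative separation argument is needed wherever you place ``dump'' regions---a vague ``bandwidth-type constraint'' will not do, since with $q=b^2-b$ the allowed span is enormous relative to $b$. Third, you never explain how the reduction forces \emph{exactly} $k$ sets to be selected: the paper does this by making the selection signal a dedicated triple candidate $t(S)$, giving $|T|=3k$ of them, and sizing $D_0$ so that $|D_0|=b-2k$, whence the unique clone containing $D_0$ absorbs exactly $2k$ unselected triple candidates and leaves exactly $k$ to be matched with their $Z(S)$ blocks. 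In your version, where selection is read off from which groups contain element candidates, nothing visibly bounds the number of selecting groups or prevents a group from mixing elements of several sets within one colour block. Since all of the difficulty of proving hardness with $n=17$ voters lives precisely in these three mechanisms, the proposal as written is a plausible plan rather than a proof.
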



Contrasting Theorems~\ref{thm:approxclones-npc}
and~\ref{thm:ApproxClone-NPh-constant-voters}, we now present an FPT
algorithm for \textsc{Approximate-Clones Partition} with
parameter~$q$, measuring the level of imperfection allowed. Hence, if
$q$ is a small constant, then we can find an approximate-clone
partition efficiently, provided it exists.
\begin{theorem}
\label{thm:approxclone_fpt_q}
  There is an $\fpt(q)$ algorithm for
  \textsc{Approximate}\textsc{-Clone Partition}.
\end{theorem}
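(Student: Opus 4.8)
The plan is to fix an arbitrary reference voter $v$ and rename the candidates so that $v \colon c_1 \succ c_2 \succ \cdots \succ c_m$. Since every $(q,b)$-approximate clone $Q$ must be ranked by $v$ within a window of at most $q+b-1$ consecutive positions containing $v$'s top and bottom members of $Q$, the clone $Q$ is ``almost an interval'' in $v$'s order; more precisely, $\{\,j : c_j \in Q\,\}$ is a $q$-approximate interval in the sense defined before Proposition~\ref{pro:wec-q-approx}, because $v$ ranks at most $q$ outsiders between consecutive members, and the interval spanned by $Q$ in $v$'s order has length at most $|Q|+q-1 \le b+q-1$. Hence every candidate set that is an admissible clone corresponds to a $q$-approximate interval of $[m]$ of span at most $q+b-1$.

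\textbf{Enumerating candidate clones.} There are only $\mathcal{O}(m)$ choices of a window $S = \{c_i, c_{i+1}, \dots, c_j\}$ in $v$'s order with $j-i+1 \le q+b-1$, i.e.\ $|S| \le q+b$. For each such window $S$, I invoke \Cref{cor:all-q-clones} to obtain, in $\fpt(q)$ time, a family $S_1^{(S)}, S_2^{(S)}, \dots$ such that $Q \subseteq S$ is a size-$b$ $q$-approximate clone iff $Q$ lies inside some $S_\ell^{(S)}$; more generally one checks directly (as in the proof of \Cref{thm:findqclones-fpt}) that the analogous enumeration works for clones of size \emph{at most} $b$, so I can in fact list all $q$-approximate clones with $|Q|\le b$ that occur in $E$. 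Each produced ``pattern'' $S_\ell^{(S)}$ has at most $q+b$ candidates, hence spans an interval of length at most $2q+b-1$ in $v$'s order; crucially, after discarding the (at most $q$) missing values at each end, it becomes an honest interval of $[m]$ — in other words, viewing a clone $Q \subseteq S_\ell^{(S)}$ through the integers $\{\,k : c_k \in Q\,\}$, every such $Q$ is a $q$-approximate interval.

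\textbf{Reduction to Weighted Exact Cover with approximate intervals.} I now build an instance of \textsc{WEC} with ground set $[m]$ (identifying $c_k$ with $k$). For every $q$-approximate clone $Q$ (with $|Q|\le b$) enumerated above, I put the set $\{\,k : c_k\in Q\,\}$ into $\calS$ with weight $1$. Each such set is a $q$-approximate interval, so \Cref{pro:wec-q-approx} applies. A partition of $C$ into at most $t$ $q$-approximate clones, each of size at most $b$, is exactly a disjoint subfamily of $\calS$ covering $[m]$ with total weight at most $t$; thus the instance is a yes-instance of \textsc{Approximate-Clone Partition} iff the constructed \textsc{WEC} instance (with bound $t$) is a yes-instance. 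Running \Cref{pro:wec-q-approx} costs $\mathcal{O}^*(2^{4q})$, and the enumeration costs $\mathcal{O}(m) \cdot \fpt(q)$, so the whole algorithm runs in $\fpt(q)$ time.

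\textbf{Main obstacle.} The delicate point is verifying that the sets fed to \textsc{WEC} really are $q$-approximate intervals \emph{in the precise technical sense} — the definition allows the first/last up to $q-1$ positions (relative to $\min$ and $\max$) to be missing, and one must check that a clone $Q$ spanning a window of length up to $q+b-1$ with at most $q$ interior gaps, when restricted away from the possibly-irregular $q$ positions at each end, becomes a genuine interval; this requires a short combinatorial argument about where outsiders can sit inside $Q$'s span. The other thing to be careful about is that \Cref{cor:all-q-clones} is stated for size-exactly-$b$ clones inside a \emph{given} set $S$ of size $\le q+b$, so I must either lightly restate it for sizes $\le b$ (immediate from the same search tree, since the algorithm already terminates whenever $|Q_{i-1}|<b$ and otherwise every size-$b$ subset of the surviving set is a clone — and subsets of clones of smaller size are handled by shrinking $b$) or argue that padding to size $b$ is harmless because any sub-$b$ clone extends within its window. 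Neither is hard, but both need to be spelled out for the proof to be rigorous.
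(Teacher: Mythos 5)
Your reduction has a genuine gap at exactly the point you flag as the ``delicate point,'' and the short combinatorial argument you hope for does not exist. The claim that every $(q,b)$-approximate clone $Q$ yields a $q$-approximate interval $\{k : c_k \in Q\}$ in $v$'s order is false: the up to $q$ outsiders that $v$ ranks inside $Q$'s span can sit \emph{anywhere} in that span, not only within distance $O(q)$ of its endpoints. Concretely, take $q=1$, $b=100$, and a vote $v\colon c_1 \succ \cdots \succ c_{101}$ in which $Q=\{c_1,\dots,c_{50},c_{52},\dots,c_{101}\}$ is a $(1,100)$-approximate clone and $\{c_{51}\}$ is a singleton clone; the index set of $Q$ misses the value $51$, which is $50$ positions from both its minimum and its maximum, so it is not a $q'$-approximate interval for any $q'$ bounded by a function of $q$. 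Feeding such sets to \Cref{pro:wec-q-approx} voids its correctness guarantee: the dynamic program only remembers a window of $2q$ positions ahead of the current index, so it cannot represent partial coverage by a set with a hole deep in its interior, and the algorithm will return false negatives. Note also that such configurations cannot be assumed away, since $c_{51}$ must belong to \emph{some} clone of the partition and cannot be merged into $Q$ without exceeding the size bound $b$.

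This is precisely the obstacle that forces the paper to introduce \emph{segments}: a segment is a base clone $Q$ together with all clones of the solution that are enclosed by $Q$ (i.e., whose members $v$ ranks strictly between $Q$'s top and bottom member), and the WEC set associated with a segment is weighted by the minimum number of clones in an implementation of that segment, not by $1$. Only after this grouping does one get the needed structure --- in a partition into segments none of which encloses another, any $0$ in a segment's signature must lie within $2q$ positions of its leftmost or rightmost $1$, so segments are $2q$-approximate intervals and \Cref{pro:wec-q-approx} applies. Moreover, computing whether a given signature corresponds to a segment and determining its size is itself a nontrivial $\fpt(q)$ subroutine (guessing the base via \Cref{cor:all-q-clones}, guessing the enclosed clones covering the leftover candidates, and finishing with a partial-set-cover step); your proposal contains no analogue of this. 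The secondary issue you raise about \Cref{cor:all-q-clones} being stated for exact size $b$ is indeed minor and repairable, but the missing segment decomposition is the heart of the proof and cannot be patched within your current outline.
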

\begin{proof}
  Consider an instance $(E,b,q,t)$ of \textsc{Approximate-Clone Partition}
  with election $E = (C,V)$. 
  Our main idea is to convert it into an instance of the \textsc{WEC},
  where the sets are intervals over $[m]$, for $m=|C|$, with possible
  ``holes'' on the first and last $2q$ entries. Such instances of
  \textsc{WEC} can be solved in $\fpt(q)$ time using
  Proposition~\ref{pro:wec-q-approx}.

  \paragraph{Initializing the WEC Instance.}
  Let $m = |C|$ be the number of candidates in $E$ and let us fix an
  arbitrary voter~$v \in V$. We rename the candidates so that
  $C = \{c_1, \ldots, c_m\}$ and
  \[
    v: c_1 \pref_v c_2 \pref_v \cdots \pref_v c_m.
  \]
  We form an instance of \textsc{WEC} with underlying set
  $[m]$, where each $i \in [m]$ corresponds to  candidate
  $c_i \in C$. The sets included in the family $\calS$ for
  \textsc{WEC} will correspond to (segments of) $(q,b)$-approximate
  clones, defined below.

  \paragraph{Segments.}
  Let $Q$ be some $(q,b)$-approximate clone and let $a$ and $b$ be its
  top- and bottom-ranked members according to~$v$. We say that a set
  $P$ of candidates is \emph{enclosed by}~$Q$ if for each $c \in P$
  we have 
  $a \pref_v c \pref_v b$.
  A set~$S$ of candidates is a
  \emph{segment} if it can be partitioned into a $(q,b)$-approximate
  clone~$Q$ and a collection~$P_1,\dots,P_k$ of additional
  $(q,b)$-approximate clones such that each $P_i$ is enclosed by $Q$.
  We refer to $Q, P_1, \ldots, P_k$ as an \emph{implementation of $S$}, with
  $Q$ being its \emph{base}, and we let $1+k$ be its \emph{size}. The size of a
  segment $S$, denoted $\size(S)$, is the smallest among the sizes of
  its implementations.

  Given a segment $S$, we define its \emph{signature} as a length-$m$ binary
  string such that for each $i \in [m]$, its $i$-th character is~$1$
  if $c_i$ belongs to $S$ and it is $0$ otherwise. Note that since a
  segment's base is a $(q,b)$-approximate clone, the segment's
  signature can have at most $q$ $0$s between its leftmost and
  rightmost~$1$.

  An important observation is that an instance of
  \textsc{Approximate-Clone Partition} is a yes-instance if and only
  if there is a partition of the candidate set $C$ into segments
  $S_1, \ldots, S_{t'}$ such that none of them is enclosed by any
  other and $\sum_{i=1}^{t'} \size(S_i) \leq t$. We observe that each
  $S_i$ in such a partition has the following property: If a $0$
  appears in its signature between the leftmost and rightmost $1$,
  then it must be within $2q$ positions either from the leftmost $1$
  or from the rightmost $1$. Indeed, otherwise one of the segments
  would be enclosed by another one (or would not be a segment at all,
  as each its implementation would require the base not to be a
  $(q,b)$-approximate clone). We refer to the segments that satisfy
  this property as \emph{valid}. Similarly, we say that signatures of
  valid segments are valid.

  \paragraph{Adding Sets to the WEC Instance.} For each valid segment
  $S$, we form a set for our \textsc{WEC} instance whose weight is
  $\size(S)$ and that includes exactly those elements $i \in [m]$ for
  which $c_i$ belongs to $S$. We fix the threshold on the total weight of the 
  desired solution for our \textsc{WEC} instance to be $t$.  Note that there
  are at most $\mathcal{O}({4q \choose q} m^2) \leq \mathcal{O}(2^{4q} m^2)$ (signatures
  of) valid segments: we have $m^2$ choices for the positions of the
  left- and rightmost $1$ in the signature, and between them we have
  at most~$q$ $0$s, each located at most $2q$ positions from the left-
  or the rightmost $1$.  Hence, it suffices to show that we can
  compute all valid segments.

  \paragraph{Computing Valid Segments and Their Sizes.} Given a valid
  signature, we show how to verify in $\fpt(q)$ time if it indeed
  corresponds to a segment and compute its size. 
  
  First, we let $S = \{e_1, \ldots, e_s\}$ be the set of
  candidates that correspond to the $1$s in the signature (so if the
  first $1$ is on position $i$ in the preference order of~$v$, then $e_1$ is $c_i$, and so on). Thus, $S$ is the candidate set about which we want to decide whether it is a segment and, if so, establish its size. Suppose
  that $Q, P_1, \ldots, P_k$ is an implementation of~$S$ with the
  smallest size. We guess the size~$b'$ of~$Q$; we have $b' \leq b$.
  
  Next, we apply \Cref{cor:all-q-clones} for the candidate set~$S$
  (recall that $|S| \leq b+q$) and size~$b'$, and guess a
  set~$Q' \subseteq S$ from its outputs, such that $Q$ is a size-$b'$
  subset of~$Q'$ and, moreover, every $b'$-sized subset of~$Q'$ is a
  $(q,b')$-approximate clone.
  
  We proceed to search for
  $P_1, \ldots, P_k$ that together with
  $Q = Q' \setminus (P_1 \cap \cdots \cap P_k)$ have the following
  properties:
  \begin{enumerate}
  \item Each of $P_1, \ldots, P_k$ is a $(q,\min(q,b))$-approximate
    clone; indeed we are looking for a $(q,b)$-approximate clone
    partition, so each $P_i$ needs to be $(q,b)$-approximate, and at
    most $q$ members of $S$ can belong to $P_1, \ldots, P_k$, so each
    of them must be $(q,q)$-approximate.
  \item $P_1\cup  \ldots \cup P_k$ includes all the candidates in
    $S \setminus Q'$ as the segment~$S$ must include them, but they are not in $Q$.
  \item $|Q| = b'$. 
  \end{enumerate}
  
  Due to the first property, for each $i \in [k]$ there is
  a $j$ such that $P_i$ is a subset of
  $\{e_j, e_{j+1}, \ldots, e_{j+2q}\}$ of size at most $q$.  Since for
  each such candidate set we can decide whether it forms a
  $(q,\min(q,b))$-approximate clone, it follows that we can compute in
  $\fpt(q)$ time a sequence $\calR=R_1, R_2, \ldots$ of
  $(q,\min(q,b))$-approximate clones such that each $P_i$ is equal to
  some $R_\ell$ from this list~$\calR$. 
  
  Furthermore, for each
  candidate $e \in S \setminus Q'$ there are at most $2^{2q}$ clones
  in the list~$\calR$ that include $e$. Hence, for each candidate
  $e \in S \setminus Q'$ we guess the clone $R_\ell \in \calR$ that
  includes~$e$ and belongs to the implementation~$Q,P_1,\dots,P_k$
  of~$S$, ensuring that each of these guessed clones is either
  disjoint from the other ones or equal to one already guessed (as the
  same clone~$R_\ell$ may include more than one candidate from
  $S \setminus Q'$). This way we obtain some first
  $P_1, \ldots, P_{k'}$ clones that satisfy the second of the above
  conditions, i.e., each $e \in S \setminus Q'$ is contained in
  some~$P_j$, for $j \in [k']$.

  Now we move on to compute the remaining $k-k'$ clones
  $P_{k'+1}, \ldots, P_{k}$, so that
  $|Q' \setminus (P_1 \cup \cdots \cup P_k)| = b'$. We let
  $Q'' = Q' \setminus (P_1 \cup \cdots \cup P_{k'})$ and $b'' =
  |Q''|$. We need $P_{k'+1}, \ldots, P_k$ to be disjoint subsets of
  $Q''$, each chosen from~$\calR$, whose total cardinality is
  exactly $b''-b' \leq q$. Finding such a subset in $\fpt(q)$ time is
  possible by invoking the \textsc{Partial Set Cover} algorithm of
  \citet{bla:j:fpt-partial-cover}. His algorithm---based on
  color coding---finds a minimum-weight family of sets that covers at least a given number
  of elements; however, by only very minor adaptations of the dynamic programming part of his algorithm we can guarantee that the returned  set family (if there is one) contains pairwise disjoint sets whose total size is exactly $b''-b'$. 

  Altogether, after considering all the guesses, we either find that our
  signature corresponds to a segment and output its size (the smallest
  size of an implementation that we have found), or none of the guesses
  led to finding an implementation and the signature does not
  correspond to a segment.


  \paragraph{Putting It All Together.}
  Let us summarize our algorithm.
  \begin{enumerate}
  \item We consider all valid signatures, for each of them checking if
    it corresponds to a segment and computing its size. We collect all
    such segments 
    in a \textsc{WEC} instance, where the segments' weights are their
    sizes.
  \item We solve the \textsc{WEC} instance by applying
    \Cref{pro:wec-q-approx}, by observing that in the language of
    \textsc{WEC}, our segments are $2q$-approximate intervals.
  \item We accept if the answer for our \textsc{WEC} instance is yes.
  \end{enumerate}
  The correctness and $\fpt(q)$ running time follow from the preceding
  discussion.

\end{proof}

\subsection{Independent Clones}
\label{sec:indep}

Regarding independent-clone partitions, our first result rules out an
efficient algorithm for finding such a partition even in the case when
we have a constant number of voters and we are searching for clones of
size at most three. It follows by a reduction from \textsc{RX3C}
and relies on \Cref{prop:RX3C-7colorable}.

\begin{restatable}[$\star$]{theorem}{thmindepcloneswbnnpc}
\label{thm:indep_clones_wbn_npc}
  \textsc{Independent-Clone Partition} is $\np$-complete, even if we
  require clones of size at most $b = 3$, each recognized by at least
  $w= 2$ voters, and the number of voters is a constant ($n=14$).
\end{restatable}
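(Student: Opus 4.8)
The plan is a polynomial-time reduction from RX3C. Let $(X,\calS)$ be an RX3C instance with $|X|=3k$; since every element lies in exactly three sets and every set has three elements, $|\calS|=3k$ as well. Apply \Cref{prop:RX3C-7colorable} to partition $\calS$ into $\calS_1,\dots,\calS_7$, each consisting of pairwise disjoint triples. If some $\calS_i$ contains exactly $k$ triples, those triples already form an exact cover, so we output a fixed yes-instance; hence assume $|\calS_i|\le k-1$ for all $i$, which means the set $L_i:=X\setminus\bigcup\calS_i$ of \emph{free} candidates of color $i$ has $|L_i|=3(k-|\calS_i|)\ge 3$. We build $E=(C,V)$ with $C=X$ and $V=\{v_i^{+},v_i^{-}:i\in[7]\}$, so $n=14$, and set $b=3$, $w=2$, $t=k$. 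For each $i$, both $v_i^{+}$ and $v_i^{-}$ rank the three members of every $S\in\calS_i$ consecutively --- possible precisely because the triples in $\calS_i$ are pairwise disjoint --- but we deliberately make the two orders differ in (a)~the order in which the color-$i$ blocks appear, (b)~the internal order of each block, and (c)~how the free candidates $L_i$ are inserted (kept contiguous at the bottom of $v_i^{+}$, but dispersed between the blocks of $v_i^{-}$; if there are too few gaps to isolate every free candidate, which forces $|L_i|$ to be large, we instead use two orderings of $L_i$ in $v_i^{\pm}$ sharing no triple of three consecutive entries). Membership of \textsc{Independent-Clone Partition} in $\np$ is clear.

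For correctness, first suppose $\calS'\subseteq\calS$ is an exact cover: then $|\calS'|=k$, the sets in $\calS'$ are pairwise disjoint and cover $C$, and each $S\in\calS'$ of color $i$ is recognized by both $v_i^{+}$ and $v_i^{-}$, so $\calS'$ is a $(2,3)$-independent-clone partition of size $k=t$. Conversely, let $\calC$ be a $(2,3)$-independent-clone partition with $|\calC|\le t=k$. Since $|C|=3k$ and every clone has size at most $3$, necessarily $|\calC|=k$ and every clone has size exactly $3$. It therefore suffices to establish the \emph{Key Claim}: a $3$-element set $T\subseteq X$ is recognized by at least two of the fourteen voters if and only if $T\in\calS$. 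Granting it, every member of $\calC$ lies in $\calS$, and as these triples partition $X$ they constitute an exact cover, finishing the reduction.

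The ``if'' direction of the Key Claim is immediate: $S\in\calS_i$ is recognized by both $v_i^{+}$ and $v_i^{-}$. The ``only if'' direction is the main obstacle. For a fixed color $i$, every triple that $v_i^{\pm}$ ranks consecutively is either a color-$i$ triple (which lies in $\calS$) or an \emph{unintended} triple that straddles the boundary between two consecutive blocks or touches the free candidates $L_i$. Two things must then be shown: (i)~$v_i^{+}$ and $v_i^{-}$ share no unintended triple, which is forced by asymmetries (a)--(c) (the cleanest bookkeeping being that an unintended triple of $v_i^{+}$ and one of $v_i^{-}$ have different ``number of free elements'' profiles, with the few remaining coincidences ruled out by the chosen internal block orders); and (ii)~for $i\neq j$, no triple is simultaneously unintended-consecutive in some $v_i^{\pm}$ and in some $v_j^{\pm}$, which one argues by classifying an unintended triple according to how its three elements split between $\bigcup\calS_i$ and $L_i$ and invoking properties of the $7$-coloring together with a suitable fixed choice of block orderings. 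Carrying out this case analysis cleanly --- in particular taming the color classes with a large free region --- is the technical heart of the proof; the remaining steps are routine.
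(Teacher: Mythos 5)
Your high-level plan is the same as the paper's: reduce from RX3C, use \Cref{prop:RX3C-7colorable} to split $\calS$ into seven families of pairwise disjoint triples, introduce two voters per color class (so $n=14$), set $b=3$, $w=2$, and force the partition to consist of exactly $|C|/3$ size-$3$ clones. The decisive difference is that the paper does \emph{not} take $C=X$: for each color $i$ it adds $2(\nu_i+\mu_i)$ disjoint \emph{dummy triples} and interleaves them as separators between consecutive intended blocks and around each free candidate, using a different set of dummies for each of the two voters of each color. This buys a one-line verification of your Key Claim: for every original candidate $x$ and every dummy $d$ there is at most one voter placing them within distance $2$, so no triple containing a dummy and an original is recognized twice, no two originals from different blocks ever become adjacent in any vote, and the only size-$3$ originals-only sets recognized by two voters are the triples of $\calS$.

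By discarding the dummies you have taken on exactly the combinatorial burden the paper's construction is designed to avoid, and that burden is not discharged in your writeup --- this is a genuine gap, and you say so yourself (``the technical heart of the proof''). Concretely, your Key Claim requires that among the $14(3k-2)$ consecutive windows of the fourteen votes, every window that is not a triple of $\calS$ occurs in at most one vote. The within-color condition (i) is plausibly manageable, but condition (ii) across colors is where the argument is thinnest: since every element of $X$ lies in exactly three triples, it is free for at least four of the seven colors, so the free regions $L_i$ and $L_j$ overlap heavily (their total size is $12k$), and at least eight of the fourteen voters must each linearly order a large common pool of free candidates without any two of them sharing a consecutive triple --- and simultaneously without colliding with the boundary windows created where blocks of $\calS_i$ meet. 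Such families of orderings do exist (e.g., arithmetic-progression orderings modulo a prime give pairwise distinct consecutive triples), but you neither construct them nor verify them against the block structure, and the informal asymmetries (a)--(c) are stated only for the pair $v_i^{+},v_i^{-}$ of a single color. As it stands the reduction is a plan for a proof rather than a proof; either carry out the explicit ordering construction and the full case analysis, or adopt the paper's device of per-voter dummy separators, which makes the Key Claim immediate.
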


Yet, if the number of clones in the desired partition (that is, $t$) is
constant, we can find this
partition 
in polynomial-time by a simple brute-force approach.

\begin{proposition}
\label{prop:indep_xp_t}
  \textsc{Independent-Clone Partition} has an $\xp$ algorithm with parameter~$t$,  the number of clones.
\end{proposition}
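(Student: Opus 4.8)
The plan is to exploit the fact that, although an election may have exponentially many subsets of candidates, the family~$\mathcal{P}$ of \emph{all} $(w,b)$-independent clones has only polynomial size. Indeed, every $X \in \mathcal{P}$ is recognized by at least $w \geq 1$ voters, so there is a voter~$u$ who ranks the members of~$X$ on consecutive positions; hence $X$ is one of the at most~$\binom{m+1}{2}$ intervals of~$u$'s preference order, where $m = |C|$. Summing over the $n$ voters, $|\mathcal{P}| \leq n\binom{m+1}{2} = O(nm^2)$, and $\mathcal{P}$ can be listed in polynomial time: enumerate, for every voter~$u$ and every interval of~$u$'s order of size at most~$b$, the corresponding candidate set, and keep it if at least $w$ voters recognize it (this last test is the polynomial-time recognition procedure described earlier; deduplication is trivial).

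Next I observe that a clone partition of size at most~$t$ consisting of $(w,b)$-independent clones is nothing but a subcollection $\mathcal{D} \subseteq \mathcal{P}$ with $|\mathcal{D}| \leq t$ whose members are pairwise disjoint and whose union is~$C$. So the algorithm simply iterates over all subcollections of~$\mathcal{P}$ of size at most~$t$ --- there are $\sum_{i=0}^{t}\binom{|\mathcal{P}|}{i} = O(|\mathcal{P}|^t) = |I|^{O(t)}$ of them, where $|I|$ is the input size --- and for each checks in polynomial time whether it is a partition of~$C$, accepting if and only if some subcollection passes. (A marginally faster but equivalent variant is a bounded-depth branching: fix a voter~$v$, repeatedly take the highest-$v$-ranked candidate that is still uncovered, branch over the at most~$|\mathcal{P}|$ clones of~$\mathcal{P}$ that contain it, and recurse; the recursion tree then has depth at most~$t$ and branching factor at most~$|\mathcal{P}|$.) Either way the running time is $|I|^{O(t)}$, i.e., polynomial for every fixed~$t$, which is exactly an $\xp$ algorithm in~$t$.

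Correctness is immediate from the definitions: every block of a valid partition lies in~$\mathcal{P}$, and conversely any pairwise-disjoint $\mathcal{D} \subseteq \mathcal{P}$ covering~$C$ with $|\mathcal{D}| \le t$ is a feasible answer. There is no genuinely hard step here; the only point that needs care is the size bound $|\mathcal{P}| = O(nm^2)$ --- that is, the reduction from arbitrary candidate subsets to intervals of a recognizing voter --- since everything downstream relies on~$\mathcal{P}$ being small enough to enumerate $t$ at a time. (This argument is specific to independent clones: for approximate clones a clone need not be an interval of any voter, and for subelection clones the blocks must be recognized by one common set of~$w$ voters, so the same trick does not immediately transfer.)
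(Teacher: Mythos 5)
Your proposal is correct and follows essentially the same route as the paper: the key observation in both is that every $(w,b)$-independent clone (with $w\geq 1$) must appear as an interval in some voter's preference order, so there are only polynomially many candidate clones, and one can brute-force over all choices of at most~$t$ of them in $|I|^{O(t)}$ time. The paper phrases this as guessing $t$ intervals in the concatenation of all $n$ preference orders, which is just a more compact packaging of your enumeration of~$\mathcal{P}$.
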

\begin{proof}
  We take all the preference orders of the voters in the input  election~$(C,V)$ and merge them into
  a single list. Then, in this list, we guess at most $t$ intervals;
  there are $\mathcal{O}((mn)^{2t})$ guesses to consider, where ${m=|C|}$ and~${n=|V|}$. For each group of
  intervals, we verify in polynomial time if it indeed corresponds to an
  independent-clone partition that meets the conditions of the
  input instance.
\end{proof}


Next, we show that \textsc{Independent-Clones Partition} is
$\mathsf{W}[1]$-hard for parameter $t$, which means that under
standard complexity-theoretic assumptions we cannot improve the above
algorithm to be an $\fpt$ one. We prove this 
by giving a parameterized reduction from the \textsc{Perfect Code}
problem whose input is an undirected graph~$G$ and an integer~$k$, and
the task is to decide whether $G$ admits a set~$S$ of~$k$ vertices
such that each vertex of~$G$ is contained in the closed neighborhood
of exactly one vertex from~$S$. In fact, we need the multicolored
version of this problem---where each vertex of~$G$ is assigned a color
from~$[k]$ and we require $S$ to contain one vertex from each color
class. $\np$-hardness of this variant follows from the $\np$-hardness
of \textsc{Perfect Code}~\cite{dow-fel:j:fixed-parameter-I} by
standard techniques.

\begin{restatable}[$\star$]{theorem}{thmindepclonepartWhardt}
    \label{thm:indep-clone-part-Whard-t}
    \textsc{Independent-Clones Partition} is \Wh{1} with parameter~$t$, even if $w=2$.
\end{restatable}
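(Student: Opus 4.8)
\textbf{The plan} is to give a parameterized reduction from the multicolored version of \textsc{Perfect Code}, which is \Wh{1} with respect to the number of colors~$k$ (taken as the parameter). Let $(G,\chi)$ be an instance with $G=(U,F)$ and $\chi\colon U\to[k]$; write $U_i=\chi^{-1}(i)$ and let $N[v]$ be the closed neighborhood of~$v$. We build an election $E=(C,V)$. Its candidate set is $C=\{x_u\mid u\in U\}\cup\{q_i,q_i'\mid i\in[k]\}$, where the $x_u$ are \emph{vertex candidates} and the $q_i,q_i'$ are \emph{markers} (two per color). For $v\in U_i$ let its \emph{intended clone} be $C(v)=\{q_i,q_i'\}\cup\{x_u\mid u\in N[v]\}$. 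We set $w=2$, $t=k$, and $b=\Delta(G)+3$, so that $|C(v)|\le b$ for all~$v$; we may assume $|U|\ge\Delta(G)$.

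For every $v\in U_i$ we add a pair of \emph{recognizer voters} to~$V$, both ranking the candidates of $C(v)$ in their top $|C(v)|$ positions with $q_i$ first and $q_i'$ last, and then listing $C\setminus C(v)$ in two ``incompatible'' orders that begin with different candidates and share no nontrivial common interval (so their only common consecutive subsets are singletons and the whole tail; e.g.\ a fixed canonical order versus a simple permutation). We arrange moreover that in each recognizer voter every marker other than $q_i,q_i'$ sits at a position larger than~$b$; there are enough vertex candidates to fill the first~$b$ positions of the tail for this. The purpose of these choices is to make the $(2,b)$-independent clones of~$E$ tame, and specifically to establish two facts. First, no $(2,b)$-independent clone contains markers of two different colors: any set of size at most~$b$ that did so would occupy more than~$b$ positions in every recognizer voter that has one of those markers near the top. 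Second, the only $(2,b)$-independent clones containing both $q_i$ and $q_i'$ are the sets $C(v)$ with $v\in U_i$: pinning $q_i,q_i'$ to the two ends of the block forces any such clone seen consecutively by a recognizer voter of some $v\in U_i$ to contain all of $C(v)$, and the incompatible tail orders prevent it from extending past~$C(v)$ or from being an unintended union such as $C(v)\cup C(v')$. Proving these two facts rigorously is a finite but delicate case analysis over which pairs of recognizer voters can jointly rank a given size-$\le b$ candidate set consecutively; this is the technical core of the proof.

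Granting the two facts, correctness is short. If $S=\{v_1,\dots,v_k\}$ with $v_i\in U_i$ is a multicolored perfect code, then $N[v_1],\dots,N[v_k]$ are pairwise disjoint and cover~$U$, so $C(v_1),\dots,C(v_k)$ are pairwise disjoint and their union is~$C$; each is a $(2,b)$-independent clone, giving a clone partition of size $k=t$. Conversely, let $\calC=\{X_1,\dots,X_{t'}\}$, $t'\le t=k$, be a partition of~$C$ into $(2,b)$-independent clones. By the first fact each $X_j$ contains at most two markers, and by the second, if it contains exactly two then $X_j=C(v_j)$ for a vertex $v_j$ whose color is the common color of those markers. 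Since the $2k$ markers are distributed among the $\le k$ parts with at most two per part, we get $t'=k$, every part equals some $C(v_j)$, and these $k$ vertices have pairwise distinct colors, one per color. As $\calC$ partitions the vertex candidates and $C(v_j)$ contributes exactly $\{x_u\mid u\in N[v_j]\}$, the sets $N[v_1],\dots,N[v_k]$ partition~$U$, i.e.\ $\{v_1,\dots,v_k\}$ is a multicolored perfect code.

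\textbf{The main obstacle} is precisely the verification of the two facts above: arranging the recognizer voters so that the clones available for a partition are essentially just the $C(v)$'s. If some unintended set covering two same-color markers could appear as a clone, the backward argument would only produce a dominating set of size~$k$ rather than a perfect code, so this ``anti-accidental-clone'' bookkeeping is load-bearing rather than cosmetic. The remaining points are routine: the new parameter is $t=k$, the reduction is polynomial, and the fact that~$E$ has $2|U|$ voters is immaterial for \Wh{1}-ness parameterized by~$t$.
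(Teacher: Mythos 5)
Your reduction has the same overall shape as the paper's: both go from \textsc{Multicolored Perfect Code} with parameter $k$, both use a pair of ``anchor'' candidates per color ($q_i,q_i'$ in your version, $s_i,t_i$ in the paper's) so that the intended clones are $\{$anchors$\}\cup N[v]$, and both use two voters per vertex whose top block is exactly the intended clone. The problem is the step you yourself flag as ``the technical core'': your two facts are not established by the construction you describe, and as sketched the construction does not support them. Your incompatibility condition (tails sharing no nontrivial common interval) is imposed only between the \emph{two recognizer voters of the same vertex}, but a $(2,b)$-independent clone only needs to be ranked consecutively by \emph{some} two of the $2|U|$ voters, and the dangerous pairs are cross-vertex. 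Concretely: (i) if $q_i$ and $q_i'$ happen to be adjacent in the tails of two voters associated with vertices of colors other than~$i$ (your ``fixed canonical order'' example would make many tails agree), then $\{q_i,q_i'\}$ is itself a $(2,b)$-clone not of the form $C(v)$, breaking your fact~(b) and with it the counting argument in the backward direction; (ii) a set such as $C(v)\cup C(v')$ for two low-degree vertices $v,v'\in U_i$ can have size at most $b=\Delta(G)+3$ and be consecutive in one recognizer of $v$ and one recognizer of $v'$ if the early tail candidates happen to line up --- again a cross-vertex pair your incompatibility condition says nothing about. So the ``delicate case analysis'' is not merely deferred; the construction lacks the ingredient that would make it go through.

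The paper closes exactly this hole with machinery you have no analogue of: eight size-$r$ dummy candidate sets, and tails in which every non-dummy candidate (vertex candidates \emph{and} the terminals of other colors) is sandwiched between dummies whose indices are cyclically shifted by the vertex index $j$. This yields the \emph{global} property that no non-dummy candidate sits next to the same dummy in more than one vote, which with $w=2$ forces every clone containing a non-dummy candidate to be recognized inside some voter's main block; the eight dummy sets then eat eight of the $t=k+8$ allowed parts, leaving exactly $k$ parts that must each be a main block $N[u_j]\cup\{s_i,t_i\}$. If you want to keep your leaner construction ($t=k$, no dummies), you would need a substitute for this shift trick that controls all $\binom{2|U|}{2}$ voter pairs simultaneously, not just the $|U|$ same-vertex pairs; as written, the proof does not do this.
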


\subsection{Subelection Clones}
\label{sec:subelection}

While for approximate and independent clones we found efficient
algorithms for computing partitions into clones of size at most two
(recall \Cref{prop:b=2}), for subelection clones the same task is
intractable.
%
\begin{restatable}[$\star$]{theorem}{thmsubelectinbnpc}
\label{thm:subelectin_b2_npc}
  \textsc{Subelection-Clone Partition} is $\np$-complete, even if the
  size of the clones is upper-bounded by~$2$ (i.e., $b=2$).
\end{restatable}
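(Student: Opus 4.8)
The plan is to reduce from the \textsc{Clique} problem, which remains $\np$\hy hard on $d$-regular graphs with $d$ given as part of the input (if one prefers to avoid this restriction, the construction below can instead be made ``uniform'' by letting each voter break the same number of designated pairs, padding with private dummy pairs). Membership in $\np$ is immediate: a certificate is the partition together with the witness set $W\subseteq V$, and one checks in polynomial time that every part has size at most $b$, that there are at most $t$ parts, that $|W|\ge w$, and that every part is ranked consecutively by every voter in $W$.

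Given a $d$-regular graph $G=(U,F)$ with $U=\{u_1,\dots,u_n\}$ and an integer $k$, I would build the following election. For each unordered pair $\{i,j\}$ with $1\le i<j\le n$ (edge \emph{or} non-edge of $G$) introduce two candidates $p_{\{i,j\}}$ and $q_{\{i,j\}}$, and call $\{p_{\{i,j\}},q_{\{i,j\}}\}$ the \emph{designated pair} of $\{i,j\}$; optionally add a block $D=d_1,\dots,d_{2N_0}$ of dummy candidates. For each vertex $u_\ell$ create one voter $v_\ell$ whose preference order begins with $d_1\pref\cdots\pref d_{2N_0}$, then lists the candidates $p_{\{\ell,m\}},q_{\{\ell,m\}}$ for all edges $\{u_\ell,u_m\}\in F$ with all the $p$'s before all the $q$'s (so none of these designated pairs is consecutive in $v_\ell$), and finally lists a length-two block $[\,p_{\{i,j\}}\;q_{\{i,j\}}\,]$ for every pair $\{i,j\}$ that is \emph{not} an edge incident to $u_\ell$, in some fixed order on the pairs. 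Thus $v_\ell$ ``breaks'' exactly the designated pairs of the edges incident to $u_\ell$, and ranks every other designated pair and every dummy pair consecutively. Set $b=2$, $w=k$, and $t=m-\bigl(N_0+\binom n2-kd+\binom k2\bigr)$, where $m=2\binom n2+2N_0$ is the number of candidates.

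For correctness, a $k$-clique $K$ in $G$ yields the witness $W=\{v_\ell:u_\ell\in K\}$ of size $k=w$: the dummy pairs and all designated pairs whose index pair is an edge not touched by $K$ (plus all non-edge designated pairs) are consecutive in every voter of $W$; these pairs are pairwise disjoint and number exactly $N_0+\binom n2-kd+\binom k2$, so completing them with singletons gives a partition of size $t$. Conversely, assume a valid partition of size at most $t$ exists with witness $W$, corresponding to $K\subseteq U$, $|K|\ge k$. Its size-two parts form a matching $M$, all of whose pairs are consecutive in every voter of $W$, with $|M|\ge m-t=N_0+\binom n2-kd+\binom k2$. If one can show $|M|\le N_0+\binom n2-e_G(K)$, where $e_G(K)$ is the number of edges of $G$ with at least one endpoint in $K$, then $e_G(K)\le kd-\binom k2$; choosing any $k$-subset $K'\subseteq K$ and using monotonicity of $e_G(\cdot)$ together with $d$-regularity ($e_G(K')=kd-|E(G[K'])|$) forces $|E(G[K'])|\ge\binom k2$, i.e.\ $G[K']$ is a clique.

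\textbf{The main obstacle} is precisely the bound $|M|\le N_0+\binom n2-e_G(K)$, i.e.\ showing that ``spurious'' consecutive pairs --- pairs of candidates consecutive in all of $W$ that are neither designated pairs nor dummy pairs --- cannot be exploited to beat the contribution of the designated and dummy pairs. I would prove this by a contraction/chain argument: in any voter's order, outside the dummy block and the broken-pair prefix the candidates form a concatenation of length-two blocks, so a spurious pair is either of the form $\{q_{\{i,j\}},p_{\{i',j'\}}\}$ for two blocks consecutive in that voter, or touches the boundary of $D$ or of the broken prefix. If such a pair is common to all of $W$, then no voter of $W$ breaks $\{i,j\}$ or $\{i',j'\}$, so the designated pairs $\{p_{\{i,j\}},q_{\{i,j\}}\}$ and $\{p_{\{i',j'\}},q_{\{i',j'\}}\}$ are \emph{also} common to $W$; restricting $M$ to a maximal run of mutually consecutive blocks none of which is broken by $W$, that run spans a path whose unique perfect matching consists of the designated pairs, so any matching inside the run has size at most the number of designated pairs it contains, and substituting a spurious pair can only lose. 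A symmetric, easier argument handles the dummy block (whose unique perfect matching pairs $d_{2i-1}$ with $d_{2i}$), and the boundary cases are dealt with by observing that a spurious pair meeting the broken prefix of $v_\ell$ with $v_\ell\in W$ would involve an edge incident to $u_\ell$, hence broken by $W$ --- impossible. Combining these local bounds gives $|M|\le N_0+\binom n2-e_G(K)$ and completes the reduction.
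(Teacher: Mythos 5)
Your reduction is from \textsc{Clique} rather than the paper's reduction from RX3C (the paper uses a per-set choice gadget $\{s_j,s_j^+,s_j^-\}$ whose resolution is forced by two large blocks of oppositely ordered voters, plus coverage voters and the threshold~$w$), so the route is genuinely different --- but the counting lemma you yourself flag as the main obstacle is exactly where the argument breaks, and your proposed treatment of the boundary cases does not close it. The problematic objects are spurious pairs straddling the seam between the $p$-run and the $q$-run of a broken prefix. Concretely, let $u_\ell\in K$ and let $e,e'$ be two edges of $G$ incident to $u_\ell$ whose other endpoints lie outside $K$ (these exist whenever $d\ge k+1$). Suppose $e$ immediately precedes $e'$ in the fixed order on pairs, and that in $v_\ell$'s broken prefix $e$ is listed first among the $p$'s and $e'$ last, so the prefix reads $p_e,\dots,p_{e'},q_e,\dots$ and $\{p_{e'},q_e\}$ is consecutive in $v_\ell$. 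For every other $v_{\ell'}$ with $u_{\ell'}\in K$, neither $e$ nor $e'$ is incident to $u_{\ell'}$, so both appear as blocks, and since they are adjacent in the fixed order these blocks are adjacent in $v_{\ell'}$, making $\{q_e,p_{e'}\}$ consecutive there as well. Hence $\{q_e,p_{e'}\}$ is a legitimate size-two clone of the subelection even though both $e$ and $e'$ are broken by $W$; adding it to the matching of dummy pairs and unbroken designated pairs gives $|M|\ge N_0+\binom{n}{2}-e_G(K)+1$, so your key bound fails and the backward direction only yields $|E(G[K'])|\ge\binom{k}{2}-1$, i.e., a near-clique. Your dismissal of this case (``a spurious pair meeting the broken prefix of $v_\ell$ would involve an edge incident to $u_\ell$, hence broken by $W$ --- impossible'') is a non sequitur: the fact that the \emph{designated} pair of $e$ is broken does not prevent a \emph{different} pair containing $q_e$ from being consecutive in every voter of~$W$.

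The gap is not obviously unfixable --- one could try to choose the per-voter prefix orderings and the fixed order on pairs so that no seam pair of any voter coincides with an adjacent-block pair, and separately kill the dummy/prefix and prefix/block boundaries --- but as written the construction leaves these orderings unspecified, and each voter contributes a constant number of potential seam pairs, so up to order-$k$ extra matching edges can survive, which is exactly enough to destroy the tight count your reduction relies on. The paper's proof avoids any such global matching bound: for each set $S_j$ the type-1 and type-2 voters force the partition to contain either $\{s_j,s_j^+\}$ or $\{s_j,s_j^-\}$ (the all-singletons option being excluded by $t=6k$), and the voter-count threshold then forces one coverage voter per element, so correctness is established by purely local arguments.
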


\iftrue
\begin{proof}
  To see that the problem is in $\np$, it suffices to guess a
  subelection and its perfect-clone partition, and verify that they
  meet the requirements of the given instance.

  Next, we give a reduction from \textsc{RX3C}.  Let $(X,\calS)$
  be our input instance where $X = \{x_1, \ldots, x_{3k}\}$ is a set
  of $3k$ elements and $\calS = \{S_1, \ldots, S_{3k}\}$ is a family
  of size-$3$ subsets of $X$. We form an election $E = (C,V)$ where
  \[
    C = \{ s_1, s^+_1, s^-_1, \ldots, s_{3k}, s^+_{3k}, s^-_{3k}\}
  \]
  and $V$ contains the following sets of voters:
  \begin{enumerate}
  \item There are $9k^2$ \emph{type-1 voters}, each with preference order
    \[
       s^+_1 \pref s_1 \pref s^-_1 \pref s^+_2 \pref s_2 \pref s^-_2 \pref \cdots
    \]
    and $9k^2$ \emph{type-2 voters}, each with
    preference order
    \[
       s^-_1 \pref s_1 \pref s^+_1 \pref s^-_2 \pref s_2 \pref s^+_2 \pref \cdots
    \]

  \item For each element $x_i \in X$ and each set $S_j$ that contains~$x_i$, we form a \emph{coverage voter} $v_{i,j}$ whose preference
    order is the same as that of type-1 voters, except that we have the following modifications: 
    \begin{enumerate}
        \item $v_{i,j} \colon s^-_j \pref s^+_j \pref s_j$, and 
        \item for each
    $\ell \in [3k]$, if
    $S_\ell$ contains $x_i$ but $\ell \neq j$, then
    $v_{i,j} \colon s^+_\ell \pref s^-_\ell \pref s_\ell$.
    \end{enumerate} 
      There are at most
    $9k^2$ coverage voters.

  \end{enumerate}
  We set the upper bound on the clone sizes to be $b=2$, the maximum
  number of clones to be $t = {6k}$, and the number of voters in the
  subelection to be $w = 18k^2+3k$.

  Let us assume that there is a subelection $E' = (C,V')$ with at
  least $w = 18k^2 + 3k$ voters that has a perfect-clone partition
  $\Pi=(C_1, \ldots, C_{t'})$ with $t' \leq t = 6k$ and $|C_i| \leq 2$ for each
  $i \in [t']$.  By a simple counting argument,
  $E'$ must include some type-1 and some type-2 voters and, hence, we
  can assume it includes all of them. Due to the preference orders of
  type-1 and type-2 voters, for each set~$S_j$, either 
  (i)~$\Pi$ contains  $\{s_j, s_j^+\}$ and $\{s^-_j\}$, or 
  (ii)~$\Pi$ contains $\{s_j, s_j^-\}$ and
  $\{s^+_j\}$, or 
  (iii)~$\Pi$ contains each of $\{s_j\}, \{s^+_j\}, \{s^-_j\}$. 
  However, this last
  option is impossible, as $|\Pi| \leq 6k$.
  Intuitively, if 
   $\{s_j, s^+_j\} \in \Pi$ then $S_j$ is included in a solution
  for the \textsc{RX3C} instance, and if 
  $\{s_j, s^-_j\} \in \Pi$ then it is not.

  Next, we observe that if $E'$ includes some voter $v_{i,j}$ then,
  due to $v_{i,j}$'s preference order and the preceding paragraph, we must
  have $\{s_j, s^+_j\} \in \Pi$ and for every
  $\ell \in [3k] \setminus \{j\}$ such that $x_i \in S_\ell$ we must
  have $\{s_\ell, s^-_\ell\} \in \Pi$. This means that for each
  $i \in [3k]$, $E'$ contains at most one coverage voter for
  $x_i$. However, since $E'$ contains $18k^2+3k$ voters, it must
  contain one coverage voter for each $x_i \in X$. This means that
  the sets $S_\ell$ for which 
  $\{s_\ell, s^+_\ell\} \in \Pi$ must
  form an exact cover of $X$.

  For the other direction, assume that $(X,\calS)$ is a yes-instance of the \textsc{RX3C} problem and that
  there are $k$ sets from $\calS$ whose union is $X$. For simplicity,
  let us assume that these are $S_1, \ldots, S_k$. We form a clone
  partition that contains clones $\{s_i, s^+_i\}$
  and $\{s^-_i\}$ for each $i \in [k]$, and clones $\{s_\ell, s^-_\ell\}$ and $\{s^+_\ell\}$ for each $\ell \in [3k] \setminus [k]$. We form a subelection
  $E'$ that includes all type-1 voters, all type-2 voters, and for
  each $x_i$ we include a single coverage voter $v_{i,j}$ such that
  $j \in [k]$ and $x_i \in S_j$ (this is possible since
  $S_1, \ldots, S_k$ form an exact cover of $X$). This shows that we
  have a yes-instance of our \textsc{Subelection-Clone Partition}
  instance.
  
\end{proof}
\fi 

On the positive side, we do have an $\fpt$ algorithm parameterized by
the number~$n$ of voters. This strongly contrasts our results 
for the other type of
clones, where this parameterization leads to intractability. We
also find $\xp$ algorithms for the parameterizations by the size $t$
of the partition and the number $w$ of voters needed to recognize
the clones.

\begin{proposition}
\label{prop:subelection_tractable}
  \textsc{Subelection-Clone Partition} has an $\fpt$ algorithm for the
  parameterization by the number~$n$ of voters, an $\xp$ algorithm for
  parameterization by the number~$w$ of voters  in the subelection, and an $\xp$
  algorithm for the parameterization by the number~$t$ of clones.
\end{proposition}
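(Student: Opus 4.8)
The plan is to reduce all three claims to the polynomial-time solvability of \textsc{Perfect-Clone Partition} (shown earlier), together with one monotonicity observation: if $W \subseteq W'$ then every perfect-clone partition of $(C,W')$ is also a perfect-clone partition of $(C,W)$, so shrinking a subelection only makes more partitions valid; hence one may always assume that the sought subelection $W$ has size exactly~$w$.

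For the parameterization by the number~$n$ of voters, I would simply iterate over all $\binom{n}{w} \le 2^n$ subsets $W \subseteq V$ with $|W| = w$, and for each of them invoke the polynomial-time algorithm for \textsc{Perfect-Clone Partition} on the election $(C,W)$ with the given bounds $b$ and~$t$; we accept iff some~$W$ succeeds. This takes $\mathcal{O}^*(2^n)$ time, i.e., it is $\fpt(n)$. The very same enumeration, now bounded by $\binom{n}{w} \le n^w$ subsets each tested in polynomial time, is already an $\xp$ algorithm for the parameterization by~$w$.

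The parameterization by the number~$t$ of clones requires a different idea, since here $w$ may be unbounded and we cannot afford to enumerate subelections. First I would reformulate the problem: the instance is a yes-instance if and only if there is a partition $\mathcal{C} = (C_1,\dots,C_{t'})$ of~$C$ with $t' \le t$ and $|C_i| \le b$ for all~$i$ that is \emph{recognized} by at least~$w$ voters, where a voter recognizes $\mathcal{C}$ if it ranks each part consecutively (those voters then form the subelection, and conversely every valid subelection is made up of voters recognizing its partition). Since $w \ge 1$, the partition $\mathcal{C}$ is recognized by at least one voter~$v$, and for such~$v$ every part $C_i$ is an interval of $v$'s preference order; thus $\mathcal{C}$ is obtained by cutting $v$'s order into at most~$t$ consecutive blocks, and there are only $\sum_{j=0}^{t-1}\binom{m-1}{j} = m^{O(t)}$ such partitions, where $m = |C|$. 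The algorithm would therefore guess the anchor voter $v \in V$, enumerate all partitions of $v$'s order into at most~$t$ consecutive blocks, discard any partition having a block of size more than~$b$, and for each surviving~$\mathcal{C}$ count in polynomial time how many voters recognize all of its parts, accepting once this count reaches~$w$. The running time is $n \cdot m^{O(t)} \cdot \mathrm{poly}$, which is $\xp(t)$.

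I expect the only genuinely non-routine step to be this last reformulation: although $w$ can be arbitrarily large, committing to one anchor voter that lies inside the subelection pins the whole partition down to $m^{O(t)}$ candidates, which converts a search over subelections into a search over partitions. Given the monotonicity remark and the polynomial-time subroutines, the correctness arguments in all three cases are then routine.
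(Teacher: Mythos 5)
Your proposal is correct and follows essentially the same route as the paper: guessing the subelection (all of it for $\fpt(n)$, exactly $w$ voters for $\xp(w)$) and invoking the polynomial-time \textsc{Perfect-Clone Partition} algorithm, and for $\xp(t)$ anchoring on one voter of the subelection, enumerating the $m^{O(t)}$ partitions of that voter's order into at most $t$ consecutive blocks of size at most $b$, and counting the voters that recognize all blocks. Your explicit monotonicity remark (shrinking the subelection preserves validity, so one may assume $|W|=w$) is a small but welcome justification that the paper leaves implicit.
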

\begin{proof}
  The $\fpt(n)$ algorithm guesses the voters that are included in the
  subelection and then runs the polynomial-time algorithm for
  \textsc{Perfect-Clone Partition}. The $\xp(w)$ algorithm proceeds
  similarly, by explicitly guessing $w$ voters.  The $\xp(t)$
  algorithm first guesses a single voter to be included in the
  subelection, and then a partition of its preference order into at
  most $t$ clones of appropriate sizes (i.e., it guesses a
  size-at-most-$t$ perfect-clone partition consistent with this vote,
  where each clone contains at most $b$ candidates). Finally, it
  checks if there are at least $w-1$ additional voters that recognize
  all of these guessed clones. 
  
\end{proof}

\begin{corollary}
  \textsc{Subelection-Clone Partition} can be solved in polynomial
  time for parameter $w = 2$.
\end{corollary}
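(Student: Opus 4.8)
The plan is to obtain this as a direct instantiation of the $\xp(w)$ algorithm from \Cref{prop:subelection_tractable}. That algorithm enumerates every size-$w$ subset $W \subseteq V$ and, for each, invokes the polynomial-time algorithm for \textsc{Perfect-Clone Partition} on the subelection $(C,W)$, accepting as soon as one of these calls returns a partition of size at most~$t$ in which every clone has size at most~$b$. Its running time is $\mathcal{O}(n^w)$ times a polynomial in the input size; fixing $w = 2$ turns the factor $n^w$ into $n^2$, so the whole procedure runs in polynomial time.

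Concretely, I would first handle the trivial case $n < 2$ (reject, since a subelection of size at least $w = 2$ cannot exist), and otherwise iterate over the $\binom{n}{2}$ unordered pairs $\{v, v'\}$ of voters, running the \textsc{Perfect-Clone Partition} algorithm on $(C, \{v, v'\})$ with the given bounds $b$ and $t$. Correctness carries over unchanged from \Cref{prop:subelection_tractable}: by definition a $(2,b)$-subelection-clone partition is a perfect-clone partition of $(C,W)$ for some $W$ with $|W| \ge 2$, and restricting a perfect-clone partition to any subset of the voters keeps it a perfect-clone partition, so such a $W$ exists if and only if some two-voter subset works.

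There is no real obstacle here; the statement is simply the $w = 2$ specialization of the preceding proposition. The only point worth a sentence is the contrast with \Cref{thm:subelectin_b2_npc}: bounding the clone size by $b = 2$ leaves \textsc{Subelection-Clone Partition} \NPc, whereas bounding by $w = 2$ the number of voters that must recognize each clone renders it polynomial-time solvable.
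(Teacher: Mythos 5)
Your proof is correct and matches the paper's approach exactly: the corollary is stated without a separate proof precisely because it is the $w=2$ instantiation of the $\xp(w)$ algorithm from \Cref{prop:subelection_tractable}, yielding an $\mathcal{O}(n^2)$ enumeration of voter pairs followed by the polynomial-time \textsc{Perfect-Clone Partition} check. Your added remark that a superset $W$ with $|W| \ge 2$ works if and only if some two-voter subset does is a correct and worthwhile justification for why guessing exactly two voters suffices.
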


The $\mathsf{XP}$ algorithms from \Cref{prop:subelection_tractable}
cannot be improved to $\fpt$ ones as the problem is
W[1]-hard 
even for $w+t$. The reduction is from the classic \textsc{Multicolored
  Clique}
problem~\cite{fel-her-ros-via:j:multicolored-hardness,pie:j:mulitcolored-clique}.


\begin{restatable}[$\star$]{theorem}{thmsubelectionwhard}
\label{thm:subelection_whard}
    \textsc{Subelection-Clone Partition} is \Wh{1} with parameter~$w+t$.
\end{restatable}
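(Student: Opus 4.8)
The plan is to give a parameterized reduction from \textsc{Multicolored Clique}: given a graph $G$ whose vertices are partitioned into color classes $V_1,\dots,V_k$, decide whether $G$ contains a clique using exactly one vertex of each color; the parameter is $k$. We will produce an instance of \textsc{Subelection-Clone Partition} in which both $w$ and $t$ are $\mathrm{poly}(k)$, while the clone-size bound $b$ is left unrestricted (it may be as large as the number of candidates). The intended behaviour is that a large enough subelection $W$ must consist of exactly one ``vertex voter'' per color plus exactly one ``edge voter'' per pair of colors, and that such a $W$ admits a common perfect-clone partition with at most $t$ clones precisely when the chosen vertices and edges form a multicolored clique.

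We organize the candidate set into $k$ disjoint \emph{color blocks} $\mathcal{B}_1,\dots,\mathcal{B}_k$ (plus a handful of separator candidates that keep clones from crossing block boundaries), where $\mathcal{B}_i$ is meant to encode a choice of vertex from $V_i$. For each vertex $v\in V_i$ we add a vertex voter $\alpha_v$, and for each edge $e=\{u,v\}$ of $G$ with $u\in V_i$, $v\in V_j$ an edge voter $\beta_e$; these are all the voters. The preference orders should be designed so that: (1)~for every $v\in V_i$ there is a ``$v$-canonical'' perfect-clone partition of $\mathcal{B}_i$ using $O(1)$ clones, and the partition that is $v_i$-canonical on every $\mathcal{B}_i$ is consistent with $\alpha_{v_i}$ for all $i$ and with $\beta_{\{v_i,v_j\}}$ for all $i<j$; (2)~on a block $\mathcal{B}_i$, both $\alpha_v$ (for $v\notin V_i$) and $\beta_e$ (for $e$ with no endpoint in $V_i$) are ``flexible'', i.e., consistent with every $u$-canonical partition of $\mathcal{B}_i$; and (3)~if $u\ne u'$ lie in the same $V_i$ then $\alpha_u$ and $\alpha_{u'}$ share no perfect-clone partition with at most $t$ clones, $\alpha_u$ clashes in the same sense with $\beta_e$ whenever $e$ meets $V_i$ in a vertex other than $u$, and $\beta_e$ clashes with $\beta_{e'}$ whenever the two edges meet a common color class in distinct vertices. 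Finally we set $w=k+\binom{k}{2}$ and let $t$ be the number of clones (which is $O(k)$) used by a globally canonical partition of all candidates.

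Granting (1)--(3), correctness is short. If $v_1,\dots,v_k$ (with $v_i\in V_i$) form a multicolored clique, then $W=\{\alpha_{v_i}:i\in[k]\}\cup\{\beta_{\{v_i,v_j\}}:i<j\}$ has size exactly $w$, and by (1)--(2) the partition that is $v_i$-canonical on each $\mathcal{B}_i$ (and isolates each separator) is a common perfect-clone partition of $W$ with $O(k)=t$ clones, each of size at most $b$. Conversely, take a subelection $W$ with $|W|\ge w$ admitting a common perfect-clone partition into at most $t$ clones. The incompatibilities in (3) imply that $W$ contains at most one vertex voter per color and at most one edge voter per pair of colors, so $|W|\le k+\binom{k}{2}=w$ forces $W$ to consist of exactly one vertex voter $\alpha_{v_i}$ per color $i$ and exactly one edge voter $\beta_{e_{ij}}$ per pair $\{i,j\}$. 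Applying (3) to $\alpha_{v_i}$ and $\beta_{e_{ij}}$ (both in $W$, hence compatible), and symmetrically to $\alpha_{v_j}$, shows $e_{ij}$ meets $V_i$ at $v_i$ and $V_j$ at $v_j$; thus $e_{ij}=\{v_i,v_j\}\in E(G)$ and $\{v_1,\dots,v_k\}$ is a clique.

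The hard part will be engineering the color blocks so that (1)--(3) hold simultaneously. The subtle obstacle is that a single voter's order over $\mathcal{B}_i$ is trivially consistent with the partition putting all of $\mathcal{B}_i$ into one clone, so merely ``writing $v$ into'' $\alpha_v$'s order does not make $\alpha_u$ and $\alpha_{u'}$ incompatible; we need the combination of two same-color vertex voters to force $\mathcal{B}_i$ to shatter into more than $t$ clones, while each individual $v$-canonical partition still survives. This seems to require a richer internal structure for $\mathcal{B}_i$ --- for instance interleaved ``comb''-shaped sub-blocks whose only coarse common refinement is the $v$-canonical one --- and quite possibly a $\mathrm{poly}(k)$-sized family of auxiliary ``anchor'' voters that every maximum subelection is forced to contain (so they are absorbed into $w$) and that rule out the trivial one-clone partition of each $\mathcal{B}_i$ without fixing the encoded vertex. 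Verifying that no unintended partition with few clones slips through --- so that the bound on $t$, rather than some artifact, is really doing the work --- is where most of the technical effort goes, which is presumably why the full proof is deferred.
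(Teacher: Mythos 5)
There is a genuine gap: everything in your argument is conditional on properties (1)--(3) of the color-block gadgets, and those gadgets are never constructed. You correctly identify the crux yourself --- with a single voter, putting all of $\mathcal{B}_i$ into one clone is always consistent, and since $t$ only upper-bounds the number of clones, a coarser partition is never penalized --- but the proposed fixes (comb-shaped sub-blocks, a family of anchor voters that every large subelection is ``forced'' to contain) are left entirely unspecified, and forcing anchors into the subelection is itself nontrivial when $w$ must remain $\mathrm{poly}(k)$. Moreover, your decision to leave $b$ unrestricted discards the one lever that makes the whole thing work cleanly: if $b$ is unbounded and each block is contiguous in every vote (which your separators suggest), the partition into whole blocks is a perfect-clone partition for \emph{every} subelection, and the instance threatens to be trivially a yes-instance. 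So the reduction as described does not yet establish hardness.

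For comparison, the paper's proof also reduces from \textsc{Multicolored Clique} but resolves exactly this obstacle by combining the size bound with a counting argument rather than by engineering incompatibilities among vertex voters. It uses \emph{only} edge voters ($w=\binom{k}{2}$, no vertex voters and no anchors), sets $b=R$ and $t=3k-1$, and arranges the candidates so that $|C|=tb-k(r-1)$; since $(t-1)b<|C|$, any solution must use exactly $t$ clones whose sizes are within a tiny slack of $b$. Each vertex block has size $2(R-r)+(r+1)>b$, so it must split into exactly two clones, and the tight sizes pin the split point to the position where the blocks $A_i,B_i$ sit inside the vote of an edge voter --- which is precisely where that voter's edge endpoint in color class $i$ is encoded. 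Dummy blocks of size exactly $b$, cyclically shifted by a voter-dependent offset, prevent clones from straddling block boundaries for two distinct voters. Thus two edge voters can recognize the same partition only if they select the same vertex in each shared color class, and $\binom{k}{2}$ mutually consistent edge voters yield a clique. If you want to salvage your plan, the cleanest route is to adopt this mechanism: impose a size bound $b$ and count candidates so that the number and sizes of clones are forced, instead of trying to make same-color voters pairwise incompatible via $t$ alone.
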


\section{Summary, Conclusions, and Future Work}
We have introduced three natural types of imperfect clones and we have
studied the complexity of (a)~identifying them in elections and of
(b)~partitioning the candidates into clones of these types, with desired
properties (such as the number of clones, their sizes, or their levels
of imperfection). Overall, our problems are largely intractable, but
we also found some polynomial-time, $\fpt$, and $\xp$ algorithms for
special cases. The main conclusion from our work is that if one were
to solve our problems in practical settings, most likely one should
use heuristic approaches, such as ILP solvers.

Our work can be extended in several ways. Foremost, while we tried to
separate the different types of imperfect clones, it would be more
realistic to study clones that can be both approximate and independent
at the same time (i.e., only some voters would be required to
recognize them, and even they could do so approximately).
It would also be very interesting to perform experimental analysis of
imperfect clones in elections. It is interesting to see how often they
indeed appear, and for what levels of imperfection.
Next,
\citet{cor-gal-spa:c:sp-width,cor-gal-spa:c:spsc-width} described how
one can use clones to generalize classic
single-peaked~\citep{bla:b:polsci:committees-elections} and
single-crossing~\citep{mir:j:single-crossing,rob:j:tax} domains, while
maintaining good computational properties of resulting elections. It
might be interesting to analyze if imperfect clones can be used in a
similar way.
Finally, it would be natural to consider approximate
clones not only in the ordinal setting, but also in the approval
one. In particular, this would open up the possibility of clone
analysis in participatory budgeting elections, which often use approval
data~\citep{rey-mal:t:pb-survey}.

\newpage

\section*{Acknowledgements}
This project has received funding from the European Research Council
(ERC) under the European Union’s Horizon 2020 research and innovation
programme (grant agreement No 101002854).
Ildikó Schlotter was further supported by the Hungarian Academy of Sciences under its Momentum
Programme (\mbox{LP2021-2}) and its János Bolyai Research Scholarship. Grzegorz Lisowski acknowledges support by the European Union under the Horizon Europe project Perycles (Participatory
Democracy that Scales). 

\begin{center}
  \includegraphics[width=3cm]{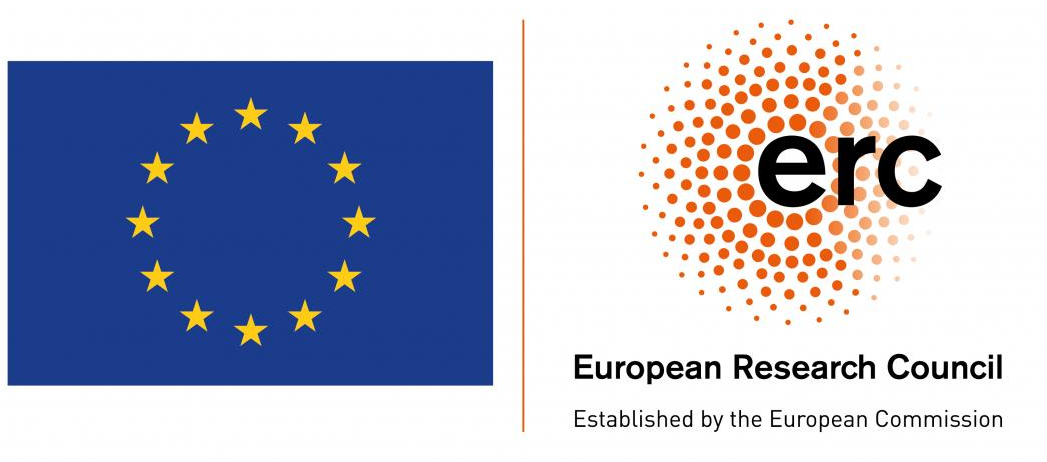}
\end{center}

\bibliographystyle{ACM-Reference-Format} 
\bibliography{bib}

\ifshort

\else
\clearpage
\begin{appendices}
\section{Missing proofs from Section~\ref{sec:approx}}

\thmapproxclonesnpc*

\begin{proof}
  Containment in $\np$ is clear, as we can check in polynomial time
  whether a given partitioning of the candidate set indeed fulfills
  the necessary conditions.
    
  We present a reduction from the following $\np$-hard problem that we
  call \textsc{$k$-Partition Into Independent Sets}: given an
  undirected graph $G=(U,F)$, the task is to decide whether $U$ can be
  partitioned into $k$ independent sets of size~$\nicefrac{|U|}{k}$.
  This problem is $\np$-hard for $k=3$, since it is equivalent with
  the special case of \textsc{3-Coloring} where we require each color
  class to be of the same size; note that by adding sufficiently many
  isolated vertices, we can ensure that 3-colorability is equivalent
  with having 3 independent sets of the same size.  Moreover,
  \textsc{$k$-Partition Into Independent Sets} is also $\np$-hard in
  the case when $k=\nicefrac{|U|}{3}$, by a reduction from the
  \triangulation\ problem~\citep{gar-joh:b:int} whose input is an
  undirected graph $H$ and the task is to decide whether we can
  partition its vertex set into cliques of size~$3$: a triangulation
  exists for~$H$ exactly if its complement can be partitioned into
  independent sets of size~$3$.  Let $U=\{u_1,\dots,u_r\}$; we may
  assume w.l.o.g.\ that $r$ is divisible by~$k$.

  We construct an instance of \approxclones\ over an ordinal
  election~$E=(C,V)$ in a manner resembling our construction in Theorem~\ref{thm:findqclones-npc}.  Let us set $b=\nicefrac{|U|}{k}$. 
  
  For  each vertex~$u_i \in U$ we create a \emph{vertex candidate}
  with the same name, and we introduce a set~$D=\{d_1,\dots,d_b\}$ of
  dummies, so the set of candidates is $C=U \cup D$.
    
  We define
  $V=\{v_{i,j}:i \in [b], j \in [r] \} \cup \{w_f: f \in F\}$ as the
  set of voters whose preferences
  are defined
  as 
  \begin{align*}
    v_{i,j} &: d_i \pref D \setminus \{d_i\} \pref U \setminus \{u_j\} \pref u_j,
    \\
    w_f &: D \setminus \{d_1\} \pref u_i \pref d_1 \pref U \setminus \{u_i,u_j\} \pref u_j
  \end{align*}
  where $f=\{u_i,u_j\}$ for some $i<j$.  We finish our construction by
  setting $b=\nicefrac{|U|}{k}$, $q=r-b$, and $t=k+1$.

  \smallskip 
  We claim that the constructed instance of \textsc{Approximate-Clones Partition} is a yes-instance if and only if $(G,k)$ is
  a yes-instance of \textsc{$k$-Partition Into Independent Sets}.%
  \smallskip

  Assume first that there exists a partitioning~$\Pi$ of~$C$ into at
  most~$t$ $q$-approximate clones, each of size at
  most~$b$. 
  Observe that no dummy candidate~$d_i \in D$ can be contained in the
  same $q$-approximate clone in~$\Pi$ as some vertex candidate
  $u_j \in U$, because the distance of $d_i$ and~$u_j$ within the
  preference list of~$v_{i,j}$ is more than~$q+b=r$. By $|D|=b$, we
  need at least one clone in~$\Pi$ that contains only dummies, and by
  $|U|=kb$ we need at least~$k$ clones in~$\Pi$ that contain only
  vertex candidates. Hence, $t=k+1$ implies that $D \in \Pi$ and
  that $\Pi$ contains $k$ partitions that each consist of~$b=k$
  vertex candidates. We claim that each such candidate set is an
  independent set in~$G$. Indeed, if $u_i$ and~$u_j$ are contained in
  the same $q$-approximate clone~$Q \in \Pi$ but
  $\{u_i,u_j\}=f \in F$, then $u_i$ and~$u_j$ are at a distance of
  $r+1>b+q$ in the preference list of voter~$w_f$; a
  contradiction. Hence, $\Pi$ yields a partitioning of~$U$ into~$k$
  independent sets of size~$b=\nicefrac{|U|}{k}$.
 
  For the other direction, assume that $U_1,\dots,U_k$ partition the
  vertex set~$U$ of~$G$ into independent sets of
  size~$\nicefrac{|U|}{k}=b$. Then $\Pi=\{D\} \cup \{U_i:i \in [k]\}$
  is a partitioning of~$C$ into exactly $t$ $q$-approximate clones,
  each of size exactly~$b$. To see this, it suffices to observe each
  pair $(u_i,u_j)$ of candidates where $\{u_i,u_j\} \notin F$ is
  within a distance of at most~$q+b=r$ in each voters' preferences:
  this is clear for all voters of the form~$v_{i',j'}$, and it also
  holds for each voter $w_f$, $f \in F$, because due to
  $\{u_i,u_j\} \notin F$ it is not possible that
  candidates~$u_i$ and~$u_j$ are the first- and last-ranked candidates
  within the vote of~$w_f$ among the candidates of~$U$. This proves
  our claim and, hence, the correctness of the reduction.  Note that
  the $\np$-hardness result for $b=3$ follows from setting
  $k=\nicefrac{|U|}{3}$, and for $t=4$ from setting $k=3$.
\end{proof}

\thmApproxCloneNPhconstantvoters*
\begin{proof}
    We have already shown containment in $\np$ in Theorem~\ref{thm:approxclones-npc}. 
    We show $\np$-hardness by a reduction from RX3C. 
    Let $(X,\calS)$ with $X=\{x_1,\dots,x_{3k}\}$ be an input instance for RX3C, and recall that $|\calS|=|X|$. 
    Let $\calS_1,\dots,\calS_7$ be the partitioning of~$\calS$ guaranteed by Proposition~\ref{prop:RX3C-7colorable}. 
    Let us write $\calS=\{S_i^1,\dots,S_i^{\nu_i}\}$ for each $i \in [7]$. 

    \paragraph{Construction.}
    We set $b=4k$ and $q=b^2-b$, and define an integer $p$ so that $p+1$ is the smallest prime larger than $(b+2)^2$.

     For each ${x_i \in X}$, we create a corresponding \emph{item candidate}, also denoted by~$x_i$. 
    We further create a \emph{triple candidate}~$t(S)$ and a set $A(S)$ of $b-4$ \emph{auxiliary candidates} for each set $S \in \calS$. 
    Additionally, we create \emph{dummy sets} $D_0,D_1,\dots,D_p$ with $|D_0|=b-2k=\frac{b}{2}$ and $|D_i|=b$ for each $i \in [p]$. 
    The candidate set is thus 
    \[C=X \cup A \cup T \cup \bigg(\bigcup_{i \in [p] \cup \{0\}} D_i \bigg)\]
    where $A=\bigcup_{S \in \calS} A(S)$ and $T=\{t(S):S \in \calS\}$. 
    
    We first create two voters, $v$ and $v'$, whose preferences are
    \begin{align*}
        v&: X \cup A \cup T \pref D_0 \pref D_1 \pref \dots \pref D_p, \\
        v' &: X \cup A \cup T \pref D_0 \pref D_{\varphi(1)} \pref \dots \pref D_{\varphi(p)}
    \end{align*}
    where ${\varphi(i)}=i(b+2) \bmod (p+1)$; since $p+1$ is a prime, the above preferences are well defined.
    Next, we create a voter~$u$ with 
    \begin{align*}
        u &: X \cup A \pref D_p \pref
        \underbrace{T \pref \dummyseries_{h} \pref D_0}_{b+q} \pref [\cdots]
    \end{align*}
    where $h=b+q-|T \cup D_0|$ and we write $\dummyseries_\ell$ for a series of $\ell$ dummies for each integer~$h$, filled up from left to right with candidates from $D_1, D_2$, and so on.
    Note that $|T \cup D_0|=b+k<b+q$, so $u$'s preferences are also well defined.
    
    Next, for each $i \in [7]$ we create two voters, $w_i$ and~$w'_i$. 
    To define their preferences, we need some further notation. Let 
    \begin{align*}
        Z(S)&= S \cup A(S) \qquad \textrm{for each } S \in \calS; \\
    Z^{-i}& = (X \cup A) \setminus  \bigg( \bigcup_{S \in \calS_i} Z(S)\bigg); \\
    T^{-i}&=T \setminus \{t(S): S \in \calS_i\}.
    \end{align*}
    Note that $|Z(S)|=b-1$ for each $S \in \calS$, and that $Z(S)$ and $Z(S')$ are disjoint for each $S,S' \in \calS_i$, for all $i \in [7]$.
    Using the same notation as before, the preferences of $w_i$ and~$w'_i$ are 
    \begin{align*}
        w_i &: Z^{-i} \pref \overbrace{\underbrace{Z(S_i^1)}_{b-1} \pref \cdots \pref \underbrace{Z(S_i^{\nu_i})}_{b-1} 
        \pref T^{-i}
        \pref \dummyseries_{h_i} }^{q+b-1}   \\
        & \pref \underbrace{t(S_i^1) \pref \dummyseries_{b-2}}_{b-1} \pref \dots 
        \pref \underbrace{t(S_i^{\nu_i-1}) \pref \dummyseries_{b-2}}_{b-1} \\
        & \pref t(S_i^{\nu_i}) \pref D_0 \pref [\cdots], \\ 
        w'_i &: Z^{-i} \pref \overbrace{\underbrace{Z(S_i^{\nu_i})}_{b-1} \pref \cdots \pref \underbrace{Z(S_i^1)}_{b-1} 
         \pref T^{-i}
        \pref \dummyseries_{h_i}}^{q+b-1}   \\
        & \pref \underbrace{t(S_i^{\nu_i}) \pref \dummyseries_{b-2}}_{b-1} \pref \dots 
        \pref \underbrace{t(S_i^2) \pref \dummyseries_{b-2}}_{b-1} \\
        & \pref t(S_i^1) \pref D_0 \pref [\cdots]
    \end{align*}
    where $h_i=b+q-1-\nu_i(b-1)-|T^{-i}| >0$.
    
    To finish our instance of \textsc{Approximate-Clone Partition}, we set $t=\nicefrac{|C|}{b}$. 
    
    \paragraph{Correctness.}
    Assume first  that $\Pi$ is a partitioning of~$C$ into at most~$t$ $(q,b)$-approximate clones; we know $|\Pi|=t$ because $|C|=tb$.

    By $p+1>(b+2)^2$, the sets $\{\varphi(i-b-1),\dots,\varphi(i+b+1)\}$ and 
    $\{i-b-1, \dots, i+b+1\}$ are disjoint for each integer $i \in [p] \cup \{0\}$ (where integers in the latter set are interpreted modulo $p+1$). Hence, for each two dummy set~$D_i$ and $D_j$, there are at least $b+1$ dummy sets (of total size at least $b^2+\frac{b}{2}$) strictly ranked between $D_i$ and~$D_j$ either in the preferences of~$v$ or in the preferences of~$v'$. Due to $b+q=b^2$ this implies that no two dummies in different dummy sets can be in the same partition of~$\Pi$. 
    Furthermore, the preferences of~$v$ and~$v'$ imply that the only dummies that can be contained in the same partition of~$\Pi$ as some candidate in $X \cup A \cup T$ are dummies in~$D_0$.
    Considering now the preferences of voter~$u$, we see that dummies in~$D_0$ can never be in the same partition of~$\Pi$ with any candidate in~$X \cup A$. 
    
    Summing this up, we deduce that any partition $\pi \in \Pi$ that contains some dummy from~$D_0$ must satisfy $\pi \subseteq T \cup D_0$. By $|D_0|=2k$, $|T|=3k$, and $b=4k$, we obtain that there exists exactly one clone in~$\Pi$ containing candidates from~$D_0$, and it must have the form $D_0 \cup T_0$ for a set $T_0 \subset T$ of $2k$ triple candidates. 
    
    Consider now the partition $\pi(t) \in \Pi$ containing candidate $t=t(S_i^j) \in T \setminus T_0$ for some $i \in [7]$ and $j \in [\nu_i]$.
    Looking at the preferences of~$w_i$, we can observe that the only candidates from~$A \cup X$ not farther than~$b+q$ from~$t$ within the preferences of~$w_i$ are the candidates in $Z(S_i^h)$ for $j \leq h \leq \nu_i$.  Similarly, the only candidates from~$A \cup X$ not farther than~$b+q$ from~$t$ within the preferences of~$w'_i$ are the candidates in $Z(S_i^h)$ for $1 \leq h \leq j$.  
    This means that \[\pi(t(S_i^j)) \cap (A \cup X) \subseteq Z(S_i^j).\] 
    Since this holds for each $t(S_i^j) \in T \setminus T_0$, we know that no two triple candidates $t,t' \in T \setminus T_0$ can be placed in the same partition of~$\Pi$, as such a partition could only contain the $k$ triple candidates $T \setminus T_0$ but no candidates from~$A \cup X$. Thus, for each $t(S_i^j) \in T \setminus T_0$ we must have $\pi(t(S_i^j))=Z(S_i^j) \cup \{t(S_i^j)\}$ due to $|\pi(t(S_i^j))|=b$. Therefore, the $k$ sets $S_i^j$ where $t(S_i^j) \in T \setminus T_0$ give a partitioning of the item candidates, i.e., of~$X$, into disjoint 3-sets from $\calS$, proving that our RX3C instance is a yes-instance.

    Assume now that $\calS' \subseteq \calS$ is a collection of $k$ sets whose union is~$X$.
    Consider the partitioning~$\Pi$ that consists of 
    \begin{itemize}
        \item the set $S \cup A(S) \cup \{t(S)\}$ for each $ S \in \calS'$,
        \item an arbitrary partitioning of $A^-=A \setminus (\bigcup_{S \in \calS'} A(S))$ into subsets of size~$b$ (note that $|A^-|=2k(b-4)$ is divisible by $b=4k$), 
        \item the set $D_0 \cup T \setminus T'$ where $T'=\{t(S):S \in \calS'\}$, and
        \item all dummy sets $D_i$, $i \in [p]$.
    \end{itemize}
    
    It is straightforward to verify that all of the above sets have size~$b$ and are $q$-approximate clones; let us give the key ideas. 
    First, note that \[|X \cup A \cup T|+b=3k(b-2)+b<b^2=b+q,\] so candidates in $X \cup A \cup T$ are at a distance at most $b+q$ from each other in the votes of~$v,v'$, and~$u$. By the discussion above, we also know that for each $S \in \calS$, candidates in~$Z(S)$ are within a distance of $q+b$ from~$t(S)$ within the preferences of~$w_i$ and~$w'_i$ irrespective of whether $S \in \calS_i$ or not (for the latter, observe that all candidates in~$T^{-i}$ are within distance $b+q$ from all candidates in~$Z^{-i}$). Additionally, candidates from~$D_0 \cup T$ are within a distance at most~$b+q$ from each other in every vote, and the same holds for each dummy set $D_i$, $i \in [p]$, as well as for the set~$A$. 
    Hence, $\Pi$ is a partitioning fulfilling all the desired conditions. 
    
\end{proof}

\section{Missing proofs from Section~\ref{sec:indep}}
\thmindepcloneswbnnpc*
\begin{proof}
  To see that the problem is in $\np$, note that in order to verify that an instance $(E,b,q,t)$ of the problem is a yes-instance it suffices to guess a
  partition of the candidate set into clones, and then verify that there are at
  most $t$ of them, all of them have size at most~$b$, and candidates in each are ranked consecutively by at least $w$ voters.

  Next, we show that the problem is $\np$-hard by giving a reduction
  from \textsc{RX3C}. 
  Let $X=\{x_1,\dots,_{3k}\}$ and $\calS$ be our input instance for \textsc{RX3C}.
  Let $\calS_1,\dots,\calS_7$ be the partitioning obtained by Proposition~\ref{prop:RX3C-7colorable}, let $\calS_i=\{S_i^1,\dots,S_i^{\nu_i}\} \subseteq \calS$ for $i \in [7]$, and
  let $\mu_i$ denote the number of candidates that are not contained in any triple in~$\calS_i$, i.e., 
  $\mu_i=3k-3\nu_i$.
  
  We form an election $E = (C,V)$ with the following
  candidates:
  \begin{enumerate}
  \item For each $x_i \in X$, we have a corresponding candidate that
    we also denote with $x_i$. We refer to these candidates as the
    \emph{original} ones.
  \item For each~$i \in [7]$, we have $2(\nu_i+\mu_i)$ sets of
    candidates, $A_i^1, \ldots, A_i^{\nu_i}$, 
    $\widetilde{A}_i^1, \ldots, \widetilde{A}_i^{\mu_i}$, 
    $B_i^1, \ldots, B_i^{\nu_i}$, 
    $\widetilde{B}_i^1, \ldots, \widetilde{B}_i^{\mu_i}$, each containing exactly three
    candidates. We refer to these sets as the \emph{dummy triples} and
    to their members as the \emph{dummy candidates}.
  \end{enumerate}
  For each $i \in [7]$, we introduce two voters, so we will have $n=14$ voters. 
  Let us rename the $\mu_i$ original candidates not contained in any triple from~$\calS_i$ so that
  $X \setminus (\bigcup \calS_i) = \{y_1, \ldots, y_{\mu_i}\}$. The exact details of
  this renaming are irrelevant for the correctness of our construction. We form voters~$v_i$ and~$u_i$ with the following
  preference orders:
  \begin{align*}
    v_i \colon  & S_i^1  \pref A_i^1 \pref S_i^2 \pref A_i^2 \pref \cdots \pref S_i^{\nu_i} \pref A_i^{\nu_i} \\
    & \pref y_1 \pref \widetilde{A}_i^1 \pref y_2 \pref \widetilde{A}_i^{2} \pref \cdots \pref y_{\mu_i} \pref \widetilde{A}_i^{\mu_i} 
    \pref [\cdots]; \\
    u_i \colon  & S_i^1 \pref B_i^1 \pref S_i^2 \pref B_i^2 \pref \cdots \pref S_i^{\nu_i} \pref B_i^{\nu_i} \\
    & \pref y_1 \pref \widetilde{B}_i^1 \pref y_2 \pref \widetilde{B}_i^{2} \pref \cdots \pref y_{\mu_i} \pref  \widetilde{B}_i^{\mu_i} 
    \pref [\cdots]
  \end{align*}
  where by $[\cdots]$ we mean listing the remaining dummy candidates
  in an arbitrary order, except that members of each dummy triple are
  ranked consecutively. 
  We form an instance of \textsc{Independent-Clone Partition} with
  election $E = (C,V)$, where we ask if there is a partition of $C$
  into at most $t = \nicefrac{|C|}{3}$ independent clones, each of
  size at most $b = 3$, and recognized by at least $w = 2$ voters.
  \smallskip

  Let us assume that there is a subfamily $\calS'$ of $\calS$ that
  contains exactly~$k$ sets whose union is~$X$. One can verify that
  this corresponds to an independent-clone partition of~$E$ where each
  dummy triple is a size-$3$ clone (recognized by all the voters) and
  each set $S \in \calS'$ is a size-$3$ clone (recognized by voters~$v_i$ and~$u_i$ where $S \in \calS_i$).
  
  For the other direction, assume that there is a clone partition that
  meets the requirements of our instance of \textsc{Independent-Clone Partition}. Since
  $t~=~\nicefrac{|C|}{3}$ and $b = 3$, we know that this partition
  contains exactly~$t$ independent clones of size~$3$ such that each
  clone is recognized by at least $w = 2$ voters. By our construction,
  none of these clones consists of both original and dummy candidates:
  indeed, for each original candidate $x_i$ and each dummy candidate
  $d$, there is at most one voter $z$ for whom
  $|\pos_z(x_i)-\pos_z(d)| \leq 2$. Next, we note that for each clone
  $\{x_a, x_b, x_c\}$ from our partition, the only two voters that
  rank its members consecutively must be $v_i$ and~$u_i$ for some $i \in [7]$ with
  $S= \{x_a,x_b, x_c\} \in \calS_i$. Consequently, clones that
  consist of the original candidates correspond to a solution to the
  input \textsc{RX3C} instance.
  
\end{proof}

\thmindepclonepartWhardt*
\begin{proof}
    We present a reduction from a special variant of the \textsc{Perfect Code} problem. In \textsc{Perfect Code}, we are given an undirected graph~$G=(U,F)$ and an integer~$k$ and the task is to decide whether there exists a \emph{perfect code of size~$k$}, that is, a set~$S$ of~$k$ vertices in~$G$ such that each vertex~$u \in U$ is contained in the closed neighborhood of exactly one vertex from~$S$. 
    Here, the \emph{closed neighborhood} of some vertex~$u \in U$, denoted by~$N[u]$, contains~$u$ together with all vertices of~$G$ adjacent to~$u$. 
    This problem is \Wh{1} when parameterized by~$k$ \cite{dow-fel:j:fixed-parameter-I}. 

    \paragraph{Multicolored Perfect Code.}
    It will be useful for us to use the following variant of \textsc{Perfect Code} that we call \textsc{Multicolored Perfect Code}: given an undirected graph~$G=(U,F)$ with its vertex set partitioned into $k$ sets~$U_1,\dots,U_k$, find a \emph{multicolored} perfect code in~$G$, i.e., a perfect code that contains exactly one vertex from each set~$U_i$, $i \in [k]$. 
    To see that this problem is \Wh{1}, we present a simple parameterized reduction from \textsc{Perfect Code}. 
    Given an input graph $G=(U,F)$ and parameter~$k$, we create a graph~$G'=(U',F')$ by introducing $k$ copies $u_1,\dots,u_k$ of each vertex $u \in U$, with $U'$ partitioned into the sets $U_i=\{u_i:u \in U\}$ for $i \in [k]$, and we add an edge between two vertices of~$U'$ if and only if either they are two copies of the same vertex in~$U$, or they are copies of two adjacent vertices in~$U$.
    It is straightforward to verify that there is a perfect code of size~$k$ in~$G$ if and only if there is a multicolored perfect code in~$G'$.


    \paragraph{Construction.}    
    Consider an instance of \textsc{Multicolored Perfect Code} with input graph $G=(U,F)$ with its vertex set partitioned into~$k$ sets $U_1,\dots,U_k$. Let $U=\{u_1,\dots,u_r\}$.
    %
    We construct an election~$(C,V)$ as follows.  
    For each vertex $u_j \in U$, we create a corresponding \emph{original candidate} also denoted as~$u_j$.
    For each $i \in [k]$ we create two \emph{terminal candidates} $s_i$ and~$t_i$.
    Additionally, we create eight sets $A, \widetilde{A}, B,\widetilde{B}, D, \widetilde{D}, E, \widetilde{E}$ of dummy candidates, each of size~$r$. 
    We let $A=\{a_1,\dots,a_r\}$, and we use the analogous notation for all the other seven dummy candidate sets; subscripts of dummies are interpreted modulo~$r$ throughout the proof.
    Let us also denote the set of terminal candidates by $\{y_1,\dots,y_{2k}\}$.
    
    For each vertex $u_j \in U_i$ for some~$i \in [k]$, we create two voters, $v_j$ and~$w_j$ whose preferences are as follows:
    \begin{align*}
        v_j \colon & s_i \pref N[u_j] \pref t_i \pref b_j \pref B\setminus \{b_j\} \pref \widetilde{B} \pref E \pref \widetilde{E}  \\
        & \pref a_j \pref \maybe{u_1} \pref \widetilde{a}_{j} \pref a_{j+1} \pref \maybe{u_2} \pref \widetilde{a}_{j+1} 
        \\ & 
        \pref \cdots \pref  a_{j+r} \pref \maybe{u_r} \pref \widetilde{a}_{j+r} \\
        & \pref d_j \pref \maybe{y_1} \pref \widetilde{d}_{j} \pref d_{j+1} \pref \maybe{y_2} \pref \widetilde{d}_{j+1} 
        \\ & 
        \pref \cdots \pref  d_{j+2k} \pref \maybe{y_{2k}} \pref \widetilde{d}_{j+2k} \pref [\cdots], \\
        w_j \colon & s_i \pref N[u_j] \pref t_i \pref a_j \pref A\setminus \{a_j\} \pref \widetilde{A} \pref D \pref \widetilde{D}  \\
        & \pref b_j \pref \maybe{u_1} \pref \widetilde{b}_{j} \pref b_{j+1} \pref \maybe{u_2} \pref \widetilde{b}_{j+1} 
        \\ & 
        \pref \cdots \pref  b_{j+r} \pref \maybe{u_r} \pref \widetilde{b}_{j+r} \\
        & \pref e_j \pref \maybe{y_1} \pref \widetilde{e}_{j} \pref e_{j+1} \pref \maybe{y_2} \pref \widetilde{e}_{j+1}
        \\ & 
        \pref \cdots \pref  e_{j+2k} \pref \maybe{y_{2k}} \pref \widetilde{e}_{j+2k} \pref [\cdots]
    \end{align*}
    where $\maybe{c}$ for some candidate~$c$ should be interpreted simply as~$c$ whenever $c$ does not appear earlier in the preference list, and should be left out when it does; 
    we use $[\cdots]$ to list all remaining dummy candidates in arbitrary order.
    We will refer to the candidates preceding the first dummy in these votes ($b_j$ or~$a_j$) as the \emph{main block} of the vote.

    We create an instance of \textsc{Independent-Clone Partition} with election~$(C,V)$ by setting $w=2$, $t=k+8$, and $b=r$. Note that this indeed yields a parameterized reduction with parameter~$t$.

    \paragraph{Correctness.}
    Les us assume first that there is a multicolored perfect clique~$X$ to our instance of \textsc{Perfect Code}. Then $\{N[u]:u \in X \}$ is a partitioning of the vertex set~$U$, and hence, $\{N[u] \cup \{s_i,t_i :u \in X, U_i \ni u\}$ is a partitioning of all non-dummy candidates. Adding the eight dummy candidate sets creates an independent-clone partition for~$E$ that contains $t=k+8$ clones, each of size at most~$r$, with each clone of the form $N[u_j] \cup \{s_i,t_i\}$ for some $u_j \in X \cap U_i$ recognized by voters~$v_j$ and~$w_j$, and with each clone containing dummies recognized by half of the voters.

    Assume now that there is an independent-clone partition~$\Pi$ that satisfies all requirements of our instance. 
    The key observation is that no non-dummy candidate is placed next to the same dummy vertex in more than one vote, due to the cyclic shift between dummies. 
    By $w=2$, this means that all independent clones in~$\Pi$ that contain a non-dummy candidate must be recognized by some voter as occurring within the main block (as after the main block, non-dummy candidates are always sandwiched between dummies). Consequently, a clone in~$\Pi$ containing $s_i$ or~$t_i$ must be recognized by voters in~$V^\star_i=\{v_j,w_j: u_j \in U_i\}$. In particular, no voter can recognize a clone that contains terminal candidates both from $\{s_i,t_i\}$ and $\{s_j,t_j\}$ for some indices $i \neq j$.
    Thus, there must be at least~$k$ clones in~$\Pi$ containing terminal candidates but no dummies. Since placing the~$8$ set of dummies into clones necessitates at least~$8$ clones (as each dummy set has size~$r=b$), by $t=k+8$ we get that $\Pi$ must contain exactly $k$ clones that contain non-dummy candidates, with one clone, say $\pi_i$, containing both~$s_i$ and~$t_i$ for each $i \in [k]$. 

    Observe now that the clones $\pi_1,\dots,\pi_k$ also partition all the original candidates. Moreover, by $s_i,t_i \in \pi_i$, we know that $\pi_i$ can only be recognized by some voter $v_j$ or $w_j$ in~$V^\star_i$ if the main block of this voter coincides with~$\pi_i$, that is, $\pi_i=N[u_j] \cup \{s_i,t_i\}$. 
    This means that each set $\pi_i \setminus \{s_i,t_i\}$ is of the form $N[u_j]$ for some vertex~$u_j \in U_i$; let us select such a vertex $u_j$. Then the sets $\pi_i \setminus \{s_i,t_i\}$ for $i \in [k]$ form a partitioning of~$U$ and the selected vertices form a multicolored perfect code for~$G$.
    This shows the correctness of our reduction.
    
\end{proof}

\section{Missing Proofs from Section~\ref{sec:subelection}}

\thmsubelectinbnpc*
\begin{proof}
  To see that the problem is in $\np$, it suffices to guess a
  subelection and its perfect-clone partition, and verify that they
  meet the requirements of the given instance.

  Next, we give a reduction from \textsc{RX3C}.  Let $(X,\calS)$
  be our input instance where $X = \{x_1, \ldots, x_{3k}\}$ is a set
  of $3k$ elements and $\calS = \{S_1, \ldots, S_{3k}\}$ is a family
  of size-$3$ subsets of $X$. We form an election $E = (C,V)$ where
  \[
    C = \{ s_1, s^+_1, s^-_1, \ldots, s_{3k}, s^+_{3k}, s^-_{3k}\}
  \]
  and $V$ contains following sets of voters:
  \begin{enumerate}
  \item There are $9k^2$ \emph{type-1 voters}, each with preference order:
    \[
       s^+_1 \pref s_1 \pref s^-_1 \pref s^+_2 \pref s_2 \pref s^-_2 \pref \cdots
    \]
    and $9k^2$ \emph{type-2 voters}, each with
    preference order:
    \[
       s^-_1 \pref s_1 \pref s^+_1 \pref s^-_2 \pref s_2 \pref s^+_2 \pref \cdots
    \]

  \item For each element $x_i \in X$ and each set $S_j$ that contains~$x_i$, we form a \emph{coverage voter} $v_{i,j}$ whose preference
    order is the same as that of type-1 voters, except that for each
    $\ell \in [3k]$ we have the following modifications: 
    \begin{itemize}
        \item $v_{i,j} \colon s^-_j \pref s^+_j \pref s_j$, and 
        \item if
    $S_\ell$ contains $x_i$ but $\ell \neq j$, then
    $v_{i,j} \colon s^+_\ell \pref s^-_\ell \pref s_\ell$.
    \end{itemize} 
      There are at most
    $9k^2$ coverage voters.

  \end{enumerate}
  We set the upper bound on the clone sizes to be $b=2$, the maximum
  number of clones to be $t = {6k}$, and the number of voters in the
  subelection to be $w = 18k^2+3k$.

  Let us assume that there is a subelection $E' = (C,V')$ with at
  least $w = 18k^2 + 3k$ voters that has a perfect-clone partition
  $(C_1, \ldots, C_{t'})$, such that $t' \leq t = 6k$ and for each
  $i \in [t']$ we have $|C_i| \leq 2$.  By a simple counting argument,
  $E'$ must include some type-1 and some type-2 voters and, hence, we
  can assume it includes all of them. Due to the preference orders of
  type-1 and type-2 voters, for each set~$S_j$, we either have clones
  $\{s_j, s_j^+\}$ and $\{s^-_j\}$, or $\{s_j, s_j^-\}$ and
  $\{s^+_j\}$, or $\{s_j\}, \{s^+_j\}, \{s^-_j\}$. However, this last
  option is impossible as we have at most $6k$ clones. Intuitively, if
  we have clone $\{s_j, s^+_j\}$ then $S_j$ is included in a solution
  for the \textsc{RX3C} instance, and if we have clone
  $\{s_j, s^-_j\}$ then it is not.

  Next, we observe that if $E'$ includes some voter $v_{i,j}$ then,
  due to $v_{i,j}$'s preference order and preceding paragraph, we must
  have clone $\{s_j, s^+_j\}$ and for every
  $\ell \in [3k] \setminus \{j\}$ such that $x_i \in S_\ell$, we must
  have clone $\{s_\ell, s^-_\ell\}$. This means that for each
  $i \in [3k]$, $E'$ contains at most one coverage voter for
  $x_i$. However, since $E'$ contains $18k^2+3k$ voters, it must
  contain one coverage voter for each $x_i \in X$. This means that
  sets $S_\ell$ for which we have clones $\{s_\ell, s^+_\ell\}$ must
  form an exact cover of $X$.

  For the other direction, assume that $(X,\calS)$ is a yes-instance of the \textsc{RX3C} problem and that
  there are $k$ sets from $\calS$ whose union is $X$. For simplicity,
  let us assume that these are $S_1, \ldots, S_k$. We form a clone
  partition that contains clones $\{s_i, s^+_i\}$
  and $\{s^-_i\}$ for each $i \in [k]$, and clones $\{s_\ell, s^-_\ell\}$ and $\{s^+_\ell\}$ for each $\ell \in [3k] \setminus [k]$. We form a subelection
  $E'$ that includes all type-1 voters, all type-2 voters, and for
  each $x_i$ we include a single coverage voter $v_{i,j}$ such that
  $j \in [k]$ and $x_i \in S_j$ (this is possible since
  $S_1, \ldots, S_k$ form an exact cover of $X$). This shows that we
  have a yes-instance of our \textsc{Subelection-Clone Partition}
  instance.
\end{proof}

\thmsubelectionwhard*
\begin{proof}
    We present a reduction from the \Wh{1} \textsc{Multicolored Clique} problem where, given an undirected graph $G=(U,F)$ and an integer parameter~$k$ with the vertex set of~$G$ partitioned into~$k$ independent sets~$U_1,\dots,U_k$, the task is to decide if $G$ contains a clique of size~$k$.
    We may assume that $|U_i|=\dots=|U_k|=r$ for some integer~$r$, so that we can write $U_i=\{u^h_i: h \in [r]\}$ for each $i$.

    We fix integers $\Delta=k(r-1)$ and $R=\Delta|F|$. 
    To define the candidate set~$C$, we introduce $D_i=\{d^h_i:h \in [R]\}$ of dummy candidates for each $i \in [k-1]$;
    throughout this proof, dummies' superscripts are modulo~$|R|$. 
    Additionally, we create candidate sets~$A_i=\{a^h_i:h \in [R-r]\}$, $B_i=\{b^h_i:h \in [R-r]\}$, and $C_i=\{c_i^h:h \in [r+1]\}$ for each $i \in [k]$. 
    The total number of candidates is thus 
    \begin{equation}
    \label{eq:sizeofC}
    \begin{split}
        |C|&=(k-1)R+k(2(R-r)+(r+1))\\
        & =(3k-1)R-\Delta.
    \end{split}
    \end{equation}
    
    Next, we create a voter~$v_e$ for each edge $e \in F$, so the number of voters is $n=|F|$. 
    
    To define the preferences of the voters, we will fix an arbitrary bijection~$\varphi:F \rightarrow [n]$. 
    The preferences of voter~$v_e$ for some edge $e=\{u_p^i,u_q^j\}$ where $i<j$ are defined as a series of blocks. 
    First, for each index $h \in [k]$, we define a \emph{vertex block $X_h$}. 
    Second, we define $m$ \emph{dummy blocks}, each containing~$R$ dummy candidates: 
    the $j^{\textrm{th}}$ dummy block orders dummies in~$D_j$ using a cyclical shift, starting with the $(\shift(j)+1)^\textrm{th}$ and ending with the $(\shift(j)+R)^\textrm{th}$ dummy candidate (modulo $R$) where $\shift(e)=(\varphi(e)-1)\Delta$ is a shifting value that depends on the number~$\varphi(e)$ corresponding to the edge~$e$. Formally, we define these blocks as follows.
    \begin{align*}
        X_{h} &: \ora{A_h} \succ c_h^1 \succ \dots \succ c_h^{r+1} \succ \ora{B_h} \qquad \text{ for }h \in [k] \setminus \{i,j\}; \\
        X_{i} &:c_i^1 \succ \dots \succ c_i^p \succ \ora{A_i} \succ \ora{B_i} \succ c_i^{p+1} \succ \dots \succ c_i^{r+1}; \\
        X_{j} &:c_j^1 \succ \dots \succ c_j^q \succ \ora{A_j} \succ \ora{B_j} \succ c_j^{q+1} \succ \dots \succ c_j^{r+1}; \\
        D_h^e &: d_h^{\shift(e)+1} \succ \dots \succ d_h^{\shift(e)+R} 
    \end{align*}
    where candidates within each set $\ora{A_h}$ and~$\ora{B_h}$ are ordered increasingly according to their superscripts. 
    
    We are now ready to describe the preferences of voter~$v_e$:
    \begin{align*}
        v_e &: X_1 \succ D_1^e \succ \dots 
        \succ X_{k-1} \succ D_k^e \succ X_k.
    \end{align*}
    
    We finish our construction by setting $w=\binom{k}{2}$, $t=3k-1$, and  $b=R$.
    Observe that the presented reduction is a parameterized reduction with parameter~$w+t$.

    We claim that the constructed election~$E=(C,V)$ with integers~$b$, $w$, and~$t$ forms a yes-instance of \textsc{Subelection-Clones Partition} if and only if $G$ contains a clique of size~$k$.

    Suppose first that there exists a subelection~$E'=(C,V')$ of~$E$ and a partitioning~$\Pi$ of the candidate set~$C$ into at most $t=3k-1$ sets, each of them a perfect clone in~$E'$ of size at most~$b$. Note that by (\ref{eq:sizeofC}) we know
    \[(t-1)b<|C| =tb-k(r-1), \]
    which implies that $\Pi$ must consist of  exactly $t$ subelection clones, and moreover, their total size is only $\Delta=R/n$ smaller than~$tb$. We will argue that each dummy block forms a clone in~$\Pi$, and each vertex block is the union of two  clones from~$\Pi$.
    
    To see this, suppose that some voter~$v_e$ is contained in the set~$V'$ of voters who recognize all sets in~$\Pi$ as perfect clones. Let $\pi_1,\dots,\pi_t$ be the clones in~$\Pi$ ordered according to~$\succ_{v_e}$; then $\pi_j$ for some $j \in [t]$ 
    starts 
    at earliest on the 
    $((j-1)b-\Delta+1)^{\textrm{th}}$ position
    and
    ends at latest on the $(jb)^{\textrm{th}}$ position. 
%
    Fix an index $i \in [k-1]$.
    By the previous paragraph, the subelection clone $\pi_{3i-1}$, containing mostly candidates from block~$X_i$, may contain at most the first $\Delta$ dummies from the dummy block~$D_i^{e}$. However, due to the cyclical shift of the dummy candidates within a dummy block (a shift by $\Delta$ from one vote to the next), the set of the first $\Delta$ dummy candidates within $D_i^e$ and~$D_i^{e'}$ are disjoint whenever $e \neq e'$, due to $|D_i|=\Delta |F|$. Consequently, as $w>1$, we get that $\pi_{3i-1}$ cannot contain any candidates from~$D_i$. 
    Similarly, the subelection clone $\pi_{3i+1}$, containing  mostly candidates from block~$X_{i+1}$, may contain at most the last $\Delta$ dummies from the dummy block~$D_i^{e}$. Using again the same argument based on the cyclical shift within each dummy block, we get that $\pi_{3i+1}$ cannot contain any candidates from~$D_i$. 
    As a consequence, the subelection clone $\pi_{3i}$ must comprise \emph{exactly} the dummy block~$D_i^e$, i.e., $\pi_{3i}=D_i$ for each $i \in [k-1]$.
    It follows that each vertex block~$X_i$ for some $i \in [k]$ must be the union of two subelection clones in~$\Pi$.
    In particular, these two blocks must have the form $A_i$

    Recall our assumption that $v_e$ is contained in voter set~$V'$ of the subelection~$E'$ that recognizes all clones in~$\Pi$, and let $e=\{u_i^p,u_j^q\}$ for some $i<j$. 
    Since both $\pi_{3i-2}$ and $\pi_{3i-1}$ must have size between $b-\Delta=R-k(r-1)$ and~$b=R$, we know that 
    $\pi_{3i-2} \cap C_i=\{c_i^1,\dots,c_i^p\}$. 
    Similarly, we get that 
    $\pi_{3j-2} \cap C_j=\{c_j^1,\dots,c_j^q\}$. 
    We refer to this fact that \emph{voter~$v_e$ selects vertices~$u_i^p$ and~$u_j^q$}. 
    Since the above reasoning holds for all voters in~$V'$, it follows that every voter in~$V'$ selects the end-vertices of the corresponding edge. Hence, the set~$F^\star =\{e:v_e \in V'\}$ has the property that all edges in~$F^\star$ that are incident to some vertex in~$U_i$, $i \in [k]$, must select the same vertex, i.e., must be incident to the same vertex of~$V_i$. Hence, there can be at most $k$ selected vertices in~$G$, and they must contain all end-vertices of the $\binom{k}{2}$ edges in~$F^\star$, implying that the selected vertices form a clique of size~$k$ in~$G$.
    
    Suppose now for the other direction that $K=\{u_i^{\sigma(i)}:i\in [k]\}$ form a clique in~$G$. Let $V'$ be the set of voters that correspond to edges contained in the clique induced by~$K$. It is then straightforward to check that the partitioning~$\Pi$ that contains the following clones satisfies all requirements:
    \begin{itemize}
        \item a clone~$D_i$ for each $i \in [k-1]$, and 
        \item clones $A_i \cup \{c_i^1,\dots,c_i^{\sigma(i)}\}$ and $\{c_i^{\sigma(i)+1},\dots,c_i^{r+1}\} \cup B_i$ for each $i \in [k]$.
    \end{itemize}
    In particular, each set in~$\Pi$ has size at most $R=b$, $|\Pi|=3k-1$, and $\Pi$ forms a perfect-clone partition for each voter in~$V'$.
    This proves the correctness of our reduction.
    
\end{proof}

\end{appendices}
\fi

\end{document}